\pgfplotsset{compat=1.14}
\def\input@path{{figures/}}
\renewcommand{\P}[1]{{\mathbb{P}}\left(#1\right)}
\newtheorem{theorem}{Theorem}[section]
\newtheorem{corollary}[theorem]{Corollary}
\newtheorem{lemma}[theorem]{Lemma}
\newtheorem{definition}[theorem]{Definition}
\crefname{theorem}{Theorem}{Theorems}
\Crefname{lemma}{Lemma}{Lemmas}
\Crefname{claim}{Claim}{Claims}
\Crefname{fact}{Fact}{Facts}
\Crefname{observation}{Observation}{Observations}
\Crefname{invariant}{Invariant}{Invariants}
\DeclareMathOperator{\OPT}{OPT}
\newcommand{\eps}{\varepsilon}
\newcommand{\rb}[1]{\left( #1 \right)} 
\newcommand{\scalar}[2]{\langle {#1} ,{#2}\rangle}
\newcommand{\bbRp}{{\mathbb{R}_{\ge 0}}}
\newcommand{\BBAlg}{\textsc{Alg}\xspace}
\newcommand{\cM}{\mathcal{M}}
\newcommand{\cS}{\mathcal{S}}
\newcommand{\cW}{\mathcal{W}}
\newcommand{\marginal}[2]{f\rb{#1 \mid #2}}
\DeclareMathOperator*{\argmin}{argmin}
\DeclareMathOperator*{\argmax}{argmax}
\DeclareMathOperator{\polylog}{polylog}
\newcommand{\omngreedy}{\textsc{Omniscent-Greedy}\xspace}
\newcommand{\omnswap}{\textsc{Omniscent-Swapping}\xspace}
\newcommand{\swapping}{\textsc{Swapping}\xspace}
\newcommand{\robswap}{\textsc{Robust-Swapping}\xspace}
\newcommand{\Sopt}{\OPT}
\newcommand{\E}[1]{{\mathbb{E}}\left[#1\right]}
\pgfplotsset{compat=1.16}
\newcommand{\I}{\mathcal{I}}
\newcommand{\cl}{\text{cl}}
\title{Deletion Robust Submodular Maximization over Matroids
}
\author{
	Paul D{\"u}tting\thanks{Google Research. Email:   
\texttt{\{\href{mailto:duetting@google.com}{duetting}, \href{mailto:ashkannorouzi@google.com}{ashkannorouzi},
		\href{mailto:silviol@google.com}{silviol}, \href{mailto:zadim@google.com}{zadim}\}@google.com}}
	\and
	Federico Fusco\thanks{Department of Computer, Control, and Management 
Engineering ``Antonio Ruberti'', Sapienza University of Rome, 
Italy. Email:   
\texttt{\href{mailto:fuscof@diag.uniroma1.it}{fuscof}@diag.uniroma1.it}.
Part of this work was done while Federico was an intern at Google Research, hosted by Paul D{\"u}tting.}
\and
Silvio Lattanzi{$^*$}
\and
Ashkan Norouzi Fard{$^*$}
\and
Morteza Zadimoghaddam{$^*$}}
\date{January 28, 2022}
\begin{document}
\maketitle

\begin{abstract}
Maximizing a monotone submodular function is a fundamental task in machine learning.  In this paper, we study the deletion robust version of the problem under the classic matroids constraint. Here the goal is to extract a small size summary of the dataset that contains a high value independent set even after an adversary deleted some elements. We present constant-factor approximation algorithms, whose space complexity depends on the rank $k$ of the matroid and the number $d$ of deleted elements.  In the centralized setting we present a $(3.582+O(\eps))$-approximation algorithm with summary size $O(k + \frac{d \log k}{\eps^2})$.  In the streaming setting we provide a $(5.582+O(\eps))$-approximation algorithm with summary size and memory $O(k + \frac{d \log k}{\eps^2})$.  We complement our theoretical results with an in-depth experimental analysis showing the effectiveness of our algorithms on real-world datasets.  
\end{abstract}

\section{Introduction}

Submodular maximization is a fundamental problem in machine learning that encompasses a broad range of applications, including active learning~\citep{GolovinK11} sparse reconstruction~\citep{Bach10,DasK11,DasDK12}, video analysis~\citep{ZhengJCP14}, and data summarization~\citep{lin-bilmes-2011-class,BairiIRB15}.

Given a submodular function $f$, a universe of elements $V$, and a family $\mathcal{F} \subseteq 2^V$ of feasible subsets of $V$, the optimization problem consists in finding a set $S \in \mathcal{F}$ that maximizes $f(S)$. A natural choice for $\mathcal{F}$ are capacity constraints (a.k.a.~$k$-uniform matroid constraints) where any subset $S$ of $V$ of size at most $k$ is feasible. Another standard restriction, which generalizes capacity constraints and naturally comes up in a variety of settings, are matroid constraints. As an example where such more general constraints are needed, consider a movie recommendation application, where, given a large corpus of movies from various genres, we want to come up with a set of recommended videos that contains at most one movie from each genre.

Exact submodular maximization is a NP-hard problem, but efficient algorithms exist that obtain small constant-factor approximation guarantees in both  centralized and streaming setting \citep[e.g.,][]{fisher78-II,CalinescuCPV11,ChakrabartiK15}.

In this work we design algorithms for submodular optimization over matroids that are robust to deletions. 
A main motivation for considering deletions are privacy and user preferences. 
For example, users may exert their ``right to be forgotten'' or may update their preferences and thus exclude some of the data points. For instance, in the earlier movie recommendation example, a user may mark some of the recommended videos as ``seen'' or ``inappropriate,'' and we may wish to quickly update the list of recommendations.

\subsection{The Deletion Robust Approach}
Following \citet{MitrovicBNTC17},
we model robustness to deletion as a two phases game against an adversary. 
In the first phase, the algorithm receives a robustness parameter $d$ and chooses a subset $W \subseteq V$ as summary of the whole dataset $V$. Concurrently, an adversary selects a subset $D\subseteq V$ with $|D| \le d$. The adversary may know the algorithm but has no access to its random bits. In the second phase, the adversary reveals $D$ and the algorithm determines a feasible solution from $W \setminus D$. The goal of the algorithm is to be competitive with the optimal solution on $V \setminus D$. Natural performance metrics in this model are the algorithm's approximation guarantee 
and its space complexity as measured by the size of the set $W$. We consider this problem in both the centralized and in the streaming setting. 

We note that, in this model, to obtain a constant-factor approximation, the summary size has to be $\Omega(k+d)$, even when $f$ is additive and the constraint is a $k$-uniform matroid. To see this, consider the case where exactly $k+d$ of the elements have unitary weight and the remaining elements have weight zero. The adversary selects $d$ of the valuable elements to be deleted, but the algorithm does not know which. To be protective against any possible choice of the adversary, the best strategy of the algorithm is to choose $W$ uniformly at random from the elements that carry weight. This leads to an expected weight of the surviving elements of $|W|\cdot k/(k+d)$, while the optimum is $k$.

Prior work gave deletion robust algorithms for the special case of $k$-uniform matroids. The state-of-the-art  \citep{KazemiZK18} is a $2+O(\eps)$ approximation with $O(k + \frac{d \log k}{\eps^2})$ space in the centralized setting, while in the streaming setting the same approximation is achievable at the cost of an extra multiplicative factor of $O(\frac{\log k}{\eps})$ in space complexity. 

For general matroids, \citet{MirzasoleimanK017} give a black-box reduction, which together with the (non-robust) streaming algorithm of \citet{Ashkan21} yields a $3.147$-approximation algorithm with space complexity $\tilde O(kd)$.\footnote{Where the $\tilde{O}$ notation hides logarithmic factors.} 
The multiplicative factor $d$ is inherent in the construction, and the $k$ is needed in any subroutine, so this approach necessarily yields to a space complexity of $\Omega(dk)$. 
The main open question from their work is to design a
\emph{space-efficient} algorithm for the problem. This question is important in many practical scenarios where datasets are large and space is an important resource.

\subsection{Our Results}

We present the first constant-factor approximation algorithms for deletion robust submodular maximization subject to general matroid constraints with almost optimal space usage, i.e., our algorithms only use $\tilde{O}(k+d)$ space. 

More formally, in the centralized setting we present a $(3.582+O(\eps))$-approximation algorithm with summary size $O(k + \frac{d \log k}{\eps^2})$.  In the streaming setting we provide a $(5.582+O(\eps))$-approximation algorithm with summary size and memory $O(k + \frac{d \log k}{\eps^2})$. The constants in the two cases are $2+\beta$ and $4+\beta$, where $\beta$ is $e/(e-1) \approx 1.582$, i.e., the best-possible approximation guarantee for the standard centralized problem \citep{CalinescuCPV11,Feige98}. 
The ``price of robustness'' is thus just an extra additive $2$ or $4$ depending on the setting. At the same time, up to possibly a logarithmic factor, the memory requirements are tight. Finally, note that the state-of-the-art for (non-robust) streaming submodular maximization with matroid constraint is a $3.147$-approximation \citep{Ashkan21}.

Intuitively, the extra difficulty in obtaining space-efficient robust deletion summaries with general matroid constraints is that the algorithm can only use elements respecting the matroid constraint to replace a good deleted element from a candidate solution. This issue gets amplified when multiple elements need to be replaced. This is in sharp contrast to the specal case of $k$-uniform matroids where all elements can replace any other element.

Both our algorithms start by setting a logarithmic number of value thresholds that span the average contribution of relevant optimum elements, and use these to group together elements with similar marginal value. The candidate solution is constructed using only elements from bundles that are large enough (a factor of $1/\eps$ larger than the number of deletions). Random selection from a large bundle protects/insures the value of the selected solution against adversarial deletions. 

In the centralized algorithm, it is possible to sweep through the thresholds in decreasing order. This monotonic iteration helps us design a charging mechanism for high value optimum elements dismissed due to the matroid constraint. We use the matroid structural properties to find an injective mapping from the optimal elements rejected by the matroid property to the set of selected  elements. Given the monotonic sweeping of thresholds, the marginal value of missed opportunities
cannot dominate the values of added elements. 

In the streaming setting, elements arrive in an arbitrary order in terms of their membership to various bundles. Therefore, we keep adding elements as long as a large enough bundle exists. Addition of elements from lower value bundles might technically prevent us from selecting some high value elements due to the matroid constraint. Thus, when considering a new element, we allow for a swap operation with any of the elements in the solution to maintain feasibility of the matroid constraint. We perform the swap if the marginal value of the new element $e$ is substantially (a constant factor) higher than the marginal value that the element $e'$ we are kicking out of the solution had when it was added to the solution. The constant factor gap between marginal values helps us account for not only $e'$ but also the whole potential chain of elements that $e'$ caused directly or indirectly to be removed in the course of the algorithm.

By testing our algorithms on multiple real-world datasets, we validate that they achieve almost optimal values while storing only a small fraction of the elements. In the settings we tried, they typically attain at least $90-95\%$ of the value output by state-of-the-art algorithms that know the deletions in advance even though we only keep a few percents of the elements. Our algorithms persevere in achieving high value solutions and maintaining a concise memory footprint even in the face of large number of deletions. 
\subsection{Related Work}


Robust submodular optimization has been studied for more than a decade; in \citet{krause2008robust}, the authors study robustness from the perspective of multiple agents each with its own submodular valuation function. Their objective is to select a subset that maximizes the minimum among all the agents' valuation functions. In some sense, this maximum minimum objective could be seen as a max-min fair subset selection goal.   Another robustness setting that deals with multiple valuation function is distributionally robust submodular optimization \citet{staib2019distributionally} in which we have access to samples from a distribution of valuations functions. These settings are fundamentally different from the robustness setting we study in our paper. 

\citet{orlin2018robust} looked at robustness of a selected set in the presence of a few deletions. In their model, the algorithm needs to finalize the solution before the adversarial deletions are revealed and the adversary sees the choices of the algorithm. Therefore having at least $k$ deletions reduces the value of any solution to zero. Here, $k$ is the cardinality constraint or the rank of the matroid depending on the setting. This is the most prohibitive deletion robust setting we are aware of in the literature and not surprisingly the positive results of \citet{orlin2018robust} and \citet{bogunovic2017robust} are mostly useful when we are dealing with a few number of deletions. 

\citet{MirzasoleimanK017} study submodular maximization, and provide a general framework to empower inserting-only algorithms to process deletions on the fly as well as insertions. 
Their result works on general constraints including matroids. 
As a result, they provide dynamic algorithms that process a stream of deletions and insertions with an extra multiplicative overhead of $d$ (on both the computation time and memory footprint) compared to insertion-only algorithms. Here $d$ is the overall number of deletions over the course of the algorithm. 
They propose the elegant idea of running $d+1$ concurrent streaming submodular maximization algorithms where $d$ is the maximum number of deletions. 
Every element is sent to the first algorithm. If it is not selected, it is sent to the second algorithm. If it is not selected again, it is sent to the third algorithm and so on. With this trick, they maintain the invariant that the solution of one of these algorithms is untouched by the adversarial deletions, and therefore this set of $\tilde{O}(dk)$ elements suffice to achieve robust algorithms for matroid constraints. This approach has the drawback of having per update computation time linearly dependent on $d$ which can be prohibitive for large number of deletions. It also has a total memory of $\tilde{O}(dk)$ which could be suboptimal compared to the lower bound of $\Omega(d+k)$. 
For closely related dynamic setting with cardinality constraint, \citet{LattanziMNTZ20} provide a $1/2$ approximation algorithm 
for $k$-uniform matroids with per update computation time of poly logarithmic in the total number of updates (length of the stream of updates). This much faster computation time comes at the cost of a potentially much larger memory footprint (up to the whole ground set).
\citet{monemizadeh2020dynamic} independently designed an algorithm with similar approximation guarantee and an update time quadratic in the cardinality constraint with a smaller dependence on logarithmic terms. 
 
As already mentioned, a simple lower bound of $\Omega(d+k)$ exists on the memory footprint needed by any robust constant-factor approximation algorithm. The gap between this lower bound and the $O(dk)$ memory footprint in \citet{MirzasoleimanK017} motivated the follow up works that focused on designing even more memory and computationally efficient algorithms. 

\citet{MitrovicBNTC17} and  \citet{KazemiZK18} are the first to study the deletion robust setting we consider in our work. They independently
designed submodular maximization algorithms for $k$-uniform matroids. \citet{MitrovicBNTC17} proposed a streaming algorithm that achieves constant competitive ratio with memory $O((k + d) \polylog(k))$. Their results extends to the case that the adversary is aware of the summary set $W$ before selecting the set $D$ of deleted elements. On the other hand, \citet{KazemiZK18} design centralized and distributed algorithms with constant factor approximation and $O(k + d\log(k))$ memory, as well as a streaming algorithm with $2$ approximation factor and memory footprint similar to their distributed result and an extra $\log(k)$ factor. 
We have borrowed some of their ideas including bundling elements based on their marginal values and ensuring that prior to deletions, elements are added to the solution only if their are selected uniformly at random from a large enough bundle (pool of candidates). 

Subsequently, \citet{AvdiukhinMYZ19} showed how to obtain algorithms for the case of knapsack constraints with similar memory requirements and constant factor approximation. While their approximation guarantee are far from optimal, their notion of robustness is strongest than the one we consider: there the adversary can select the set of deleted elements adaptively with respect to the summary produced by the algorithm.  

We aim to achieve the generality of \citet{MirzasoleimanK017} work by providing streaming algorithms that work for all types of matroid constraints while maintaining almost optimal computation time and space efficiency of \citet{MitrovicBNTC17} and \citet{KazemiZK18}.

Sliding window setting is another well studied robustness model. In this case, the deletions occur as sweep through the stream of elements and in that sense they occur regularly rather than in an adversarial manner. Every element is deleted exactly $W$ steps after its arrival. Thus at every moment, the most recent $W$ elements are present and the objective is to select a subset of these present elements. \citet{epasto2017submodular} design a $1/3$ approximation algorithm for the case of cardinality constraints with a memory independent of the window size $W$. \citet{zhao2019submodular} provide algorithms that extend the sliding window algorithm to settings that elements have non-uniform lifespans and leave after arbitrary times.

Prior to deletion robust models and motivated by large scale applications, \citet{MirzasoleimanKSK13} designed the distributed greedy algorithm and showed that it achieves provable guarantees for cardinality constraint problem under some assumptions. \citet{mirrokni2015randomized} followed their work up by providing core-set frameworks that always achieve constant factor approximation in the distributed setting. \citet{barbosa2016new} showed how to approach the optimal $1-1/e$ approximation guarantee by increasing the round complexity of the distributed algorithm. For the case of matroid constraints, \citet{ene2019submodular} provided distributed algorithms that achieve the $1-1/e$ approximation with poly-logarithmic number of distributed rounds. For the streaming setting, the already mentioned \citet{Ashkan21} provided a $0.31$ competitive ratio algorithm with $\Tilde{O}(k)$ memory with a matroid constraint of rank $k$. 

\section{Preliminaries}
We consider a set function $f: 2^V \to \bbRp$ on a (potentially large) ground set $V$. Given two sets $X, Y \subseteq V$, the \emph{marginal gain} of $X$ with respect to $Y$ quantifies the change in value of adding $X$ to $Y$ and is defined as
\[
	\marginal{X}{Y} = f(X \cup Y) - f(Y) \,.
\]
When $X$ consists of a singleton $x$, we use the shorthand $f(x|Y)$ instead of $f(\{x\}|Y)$. The function $f$ is called \emph{monotone} if $\marginal{e}{X}  \geq 0$ for each set $X \subseteq V$ and element $e \in V$, and \emph{submodular} if for any two sets $X$ and $Y$ such that $X \subseteq Y \subseteq V$ and any element $e \in V \setminus Y$ we have 
\[
\marginal{e}{X} \ge \marginal{e}{Y}.
\]
Throughout the paper, we assume that $f$ is given in terms of a value oracle that computes $f(S)$ for given $S \subseteq V$. We also assume that $f$ is \emph{normalized}, i.e., $f(\emptyset) = 0$. We slightly abuse the notation and for a set $X$ and an element $e$, use $X+e$ to denote $X \cup \{e\}$ and $X - e$ for $X \setminus \{e\}$. 

A non-empty family of sets $\cM \subseteq 2^V$ is called a \emph{matroid} if it satisfies the following properties. \textit{Downward-closedness}: if $A \subseteq B$ and $B \in \cM$, then $A \in \cM$; \textit{augmentation}: if $A, B \in \cM$ with $|A| < |B|$, then there exists $e \in B$ such that $A + e \in \cM$. We call a set $A \subseteq 2^V$ \emph{independent}, if $A \in \cM$, and \emph{dependent} otherwise. An independent set that is maximal with respect to inclusion is called a {\em base}; all the bases of a matroid share the same cardinality $k$, which is referred to as the {\em rank} of the matroid. Dually to basis, for any $A\notin \cM$, we can define a {\em circuit} $C(A)$ as a minimal dependent subset of $A$, i.e. $C(A) \notin \cM$ such that all its proper subsets are independent. 

The deletion robust model consists of two phases. The input of the first phase is the ground set $V$ and a robustness parameter $d$, while the input of the second phase is an adversarial set of $d$ deleted elements $D \subset V$, along with the outputs of the first phase. The goal is to design an algorithm that constructs a small size summary robust to deletions $W \subseteq V$ in the first phase, and a solution $S \subseteq W \setminus D$ that is independent with respect to matroid $\cM$ in the second phase. The difficulty of the problem lies in the fact that the summary $W$ has to be robust against {\em any} possible choice of set $D$ by an adversary oblivious to the randomness of the algorithm. 

For any set of deleted elements $D$, the optimum solution denoted by $\OPT(V \setminus D)$ is defined as
\[
\OPT(V \setminus D) = \argmax_{R \subseteq V \setminus D, R \in \cM} f(R).
\]
We say that a two phase algorithm is an $\alpha$ approximation for this problem if for all $ D \subset V \text{ s.t. } |D| \le d, $ it holds that 
\[
    f(\OPT(V \setminus D)) \le \alpha \cdot \E{f(S_D)}, 
\]
where $S_D$ is the solution produced in Phase II when set $D$ is deleted and the expectation is with respect to the eventual internal randomization of the algorithm. An important feature of a two phase algorithm is its summary size, i.e., the cardinality of the set $W$ returned by the first phase.

In this paper, we also consider the streaming version of the problem where the elements in $V$ are presented in some arbitrary order (Phase I) and at the end of such online phase the algorithm has to output a deletion robust summary $W.$ Finally, the deleted set $D$ is revealed and Phase II on $W \setminus D$ takes place offline. The quality of an algorithm for the streaming problem is not only assessed by its approximation guarantee and summary size, but is also measured in terms of its {\em memory}, i.e., the cardinality of the buffer in the online phase.

\section{Centralized Algorithm} \label{section:cent}
In this section, we present a centralized algorithm for the Deletion Robust Submodular Maximization subject to a matroid constraint. To that end, we start by defining some notations. Let $\tilde e$ be the $(d+1)$-th element with largest value according to $f$ and $\Delta$ its value. Given the precision parameter $\eps$, we define the set of threshold $T$ as follows 
\[
T = \left\{ (1 + \eps)^i\mid \eps \cdot \frac{\Delta}{(1+\eps)k} < (1 + \eps)^i \le \Delta \right\}.
\]
The first phase of our algorithm constructs the summary in iterations using two main sets $A$ and $B$. We go over the thresholds in $T$ in decreasing order and update sets $A, B$, and $V$. Let $\tau$ be the threshold that we are considering, then $B_\tau$ contains any element $e \in V$ such that $f(e | A) > \tau$ and $ A
+ e \in \cM$. These are the high contribution elements that can be added to $A$. As long as the size of $B_\tau> d / \eps$, we choose uniformly at random an element from $B_\tau$, add it to $A$ and recompute $B_\tau$. We observe that  $A$ is robust to deletions, i.e., when $d$ elements are deleted, the probability of one specific element in $A$ being deleted is intuitively at most $\eps$. Moreover, since each element added to $A$ is drawn from a pool of elements with {\em similar} marginals, the value of this set after the deletions decreases at most by a factor $(1-\eps)$ in expectation. As soon as the cardinality of $B_{\tau}$ drops below $d/\eps$, we can no more add elements from it directly to $A$ while keeping $A$ robust and feasible. Therefore, we remove these elements from $V$ and save them for Phase II so that they can be used if they are not deleted. During the execution of the algorithm we need to take special care of the top $d$ element with highest $f$ values. To avoid complications, we remove them from the instance before starting the procedure and add them to set $B$ at the end. This does not affect the general logic and only simplifies the presentation and the proofs.

\begin{algorithm}[t]
\caption{Centralized Algorithm Phase I} \label{alg:centralized-phase-I}
\begin{algorithmic}[1]
\STATE \textbf{Input:} Precision $\varepsilon$ and deletion parameter $d$
\STATE $\Delta \gets $ $(d+1)^{th}$ largest value in $\{f(e) \mid e \in V\}$
\STATE $V_d \gets $ elements with the $d$ largest values
\STATE $V \gets V\setminus V_d$, $A \gets \emptyset$
\STATE $T = \left\{ (1 + \eps)^i\mid \eps \cdot \frac{\Delta}{(1+\eps)k} < (1 + \eps)^i \le \Delta \right\}$
\FOR{$\tau \in T$ in decreasing order}
    \STATE $B_{\tau} \gets \{e \in V  \mid A + e \in \cM, f(e \mid A) \geq \tau \}$
    \WHILE{$|B_\tau| \geq \frac{d}{\eps}$}
    \STATE $e \gets$ a random element from $B_{\tau}$
    \STATE $V \gets V - e$, $A \gets A + e$
    \STATE $B_{\tau} \gets \{e \in V  \mid A + e \in \cM, f(e \mid A) \geq \tau \}$
    \ENDWHILE
    \STATE Remove $B_{\tau}$ from $V$
\ENDFOR
\STATE $B \gets V_d \cup \bigcup_{\tau \in T} B_{\tau}$
\RETURN $A, B$
\end{algorithmic}
\end{algorithm}

The summary $W$ computed at the end of Phase I is composed by the union of $A$ and $B$. The second phase of our algorithm uses as routine an arbitrary algorithm \BBAlg for monotone submodular maximization subject to matroid constraint. \BBAlg takes as input a set of elements, function $f$ and matroid $\cM$ and returns a $\beta$-approximate solution. In this phase we simply use \BBAlg to compute a solution among all the elements in $A$ and $B$ that survived the deletion and return the best among the computed solution and the value of the surviving elements in $A$. We state and prove now our main result in the centralized setting; we defer the proofs of some intermediate steps to \Cref{app:centralized}.

\begin{algorithm}[t]
\caption{Algorithm Phase II} \label{alg:phase-II}
\begin{algorithmic}[1]
\STATE \textbf{Input:} $A$ and $B$ outputs of phase I, set $D$ of deleted elements and optimization routine $\BBAlg$
\STATE $A' \gets A \setminus D$, $B' \gets B \setminus D$
\STATE $\tilde S \gets \BBAlg(A' \cup B')$
\RETURN $S \gets \argmax \{f(A'), f(\tilde S)\}$
\end{algorithmic}
\end{algorithm}

\begin{theorem}
\label{thm:robust-centralized}
    For $\eps \in (0, 1/3)$, Centralized Algorithm (\cref{alg:centralized-phase-I} and \cref{alg:phase-II}) is in expectation a $(2+ \beta + O(\eps))$-approximation algorithm with summary size $O(k + \frac{d \log k}{\eps^2})$, where $\beta$ is the approximation ratio of the auxiliary algorithm \BBAlg.
\end{theorem}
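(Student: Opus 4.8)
The plan is to prove the two parts of the statement separately, with essentially all of the work going into the approximation guarantee. The summary-size bound is immediate: $A$ is independent, so $|A|\le k$; each bucket is finalized only once the while-loop guard fails, so $|B_\tau|<d/\eps$ at that point; and $T$ consists of the powers of $(1+\eps)$ in a range of multiplicative width $(1+\eps)k/\eps$, so $|T|=O(\eps^{-1}\log k)$. With $|V_d|=d$ this gives $|B|=O(d\eps^{-2}\log k)$ and $|W|=|A|+|B|=O(k+d\eps^{-2}\log k)$.

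For the ratio, fix the deleted set $D$ and write $O=\OPT(V\setminus D)$, $A'=A\setminus D$, $B'=B\setminus D$, and $\tilde S=\BBAlg(A'\cup B')$. I would first prove, for every realization of the algorithm's randomness, the per-realization inequality $(1-\eps)\,f(O)\le (2+\eps)\,f(A)+\beta\,f(\tilde S)$. Granting this, I take expectations and invoke two facts: (i) the robustness of $A$, namely $\E{f(A)}\le(1+O(\eps))\,\E{f(A')}$, which I would isolate as a self-contained lemma (each element of $A$ is drawn uniformly from a pool of size $\ge d/\eps$, hence deleted with probability $\le\eps$, and a telescoping/submodularity argument bounds the expected loss by an $O(\eps)$ fraction, in the spirit of Kazemi et al.); and (ii) $\E{f(A')}\le\E{f(S)}$ and $\E{f(\tilde S)}\le\E{f(S)}$, since Phase II returns the better of $f(A')$ and $f(\tilde S)$. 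These yield $(1-\eps)f(O)\le\big[(2+\eps)(1+O(\eps))+\beta\big]\E{f(S)}$, i.e.\ the claimed $(2+\beta+O(\eps))$ bound for $\eps<1/3$.

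To obtain the per-realization inequality I partition $O\setminus A$ into the surviving saved elements $O\cap B$, the discarded addable elements $O_1=\{o\in O\setminus(A\cup B):A+o\in\cM\}$, and the discarded blocked elements $O_2=\{o\in O\setminus(A\cup B):A+o\notin\cM\}$. Monotonicity and submodularity give $f(O)\le f(A\cup O)\le f(A\cup(O\cap B))+\sum_{o\in O_1}f(o\mid A)+\sum_{o\in O_2}f(o\mid A)$, and one more use of submodularity gives $f(A\cup(O\cap B))\le f(A)+f(O\cap B)$. Three bounds then close the argument: (a) since $O\cap B$ is an independent surviving subset of $A'\cup B'$, $f(O\cap B)\le f(\OPT(A'\cup B'))\le\beta f(\tilde S)$; (b) an element of $O_1$ stays addable yet is never saved, so at the smallest threshold $f(o\mid A)<\tau_{\min}\le\eps\Delta/k$, whence $\sum_{O_1}f(o\mid A)\le\eps\Delta\le\eps f(O)$, using $\Delta\le f(O)$ because a value-$\ge\Delta$ element survives the at most $d$ deletions; and (c) the matroid charging $\sum_{O_2}f(o\mid A)\le(1+\eps)f(A)$, described next. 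Combining gives exactly $(1-\eps)f(O)\le(2+\eps)f(A)+\beta f(\tilde S)$.

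The crux, and the step I expect to be the main obstacle, is (c). Writing $A_{\prec a}$ for the state of $A$ just before $a$ was inserted and $\tau_a$ for the threshold at which it was inserted (so $f(a\mid A_{\prec a})\ge\tau_a$ and $f(A)=\sum_{a\in A}f(a\mid A_{\prec a})$ by telescoping), I aim for an injection $\phi:O_2\to A$ with $f(o\mid A)\le(1+\eps)\,f(\phi(o)\mid A_{\prec\phi(o)})$; summing over the injective image then gives $(1+\eps)f(A)$. For injectivity I invoke the standard matroid exchange lemma: since $O_2\subseteq\mathrm{span}(A)$, the set $A$ is a basis of the matroid restricted to $A\cup O_2$ and $(O\cap A)\cup O_2$ is independent there, so there is an injection $\phi:O_2\to A\setminus O$ with $A-\phi(o)+o\in\cM$. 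The value comparison then comes from the decreasing sweep: because $A_{\prec\phi(o)}\subseteq A-\phi(o)$ we have feasibility $A_{\prec\phi(o)}+o\in\cM$, so if $f(o\mid A_{\prec\phi(o)})$ were $\ge(1+\eps)\tau_a$, then throughout the processing of the \emph{previous} (larger) threshold $(1+\eps)\tau_a$ the element $o$ would have had marginal $\ge(1+\eps)\tau_a$ and remained feasible, forcing it into that finalized bucket and hence into $B$, contradicting $o\in O_2$; the boundary case where $\phi(o)$ is inserted at the top threshold is handled by $f(o)\le\Delta<(1+\eps)\tau_a$ since $O_2$ avoids the top-$d$ elements. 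Thus $f(o\mid A)\le f(o\mid A_{\prec\phi(o)})<(1+\eps)\tau_a\le(1+\eps)f(\phi(o)\mid A_{\prec\phi(o)})$. The delicate point to verify carefully is precisely that this single exchange injection simultaneously supports the threshold comparison for every $o$, i.e.\ that the feasibility $A_{\prec\phi(o)}+o\in\cM$ is preserved all the way up through the higher threshold; this is what couples the matroid exchange structure to the monotone thresholding and is where I would concentrate the rigorous work.
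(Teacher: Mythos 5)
Your proposal is correct, and its skeleton coincides with the paper's: the same summary-size count, the same robustness lemma $\E{f(A)}\le(1+3\eps)\E{f(A')}$ (\Cref{lem:robustness}, which you correctly identify as a Kazemi-style uniform-sampling argument), the same use of the $\beta$-guarantee of \BBAlg on an independent surviving subset of $A'\cup B'$, and the same closing arithmetic. The genuine divergence is in the crux you flagged: the charging of blocked optimal elements to $A$. The paper (\Cref{lemma:offline-third-eq}) records, for each blocked element $x_i$, the failure set $F_i$ (the state of $A$ when $x_i$ first became infeasible) and proves a bespoke Hall's-theorem lemma (\Cref{lem:hall-lemma}, via closure and rank arguments) to obtain an injection $h$ with $h(x_i)\in F_i$, so the timing property ``$h(x_i)$ entered $A$ while $x_i$ was still feasible'' is baked into the construction. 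You instead take the classical basis-exchange (Brualdi-type) injection $\phi:O_2\to A\setminus\OPT$ with $A-\phi(o)+o\in\cM$ in the restriction of $\cM$ to $A\cup O_2$, and recover the timing property from downward-closedness alone: since the centralized algorithm never removes elements from $A$, one has $A_{\prec\phi(o)}\subseteq A-\phi(o)$, hence $A_{\prec\phi(o)}+o\in\cM$, and the same feasibility persists back through the end of every earlier (higher) threshold. This is precisely the delicate point you isolated, and it does go through, simultaneously for all $o$, because each threshold comparison only uses its own $\phi(o)$; note that it hinges on the monotone growth of $A$, which is exactly why this argument would not transfer to the paper's streaming algorithm, where swaps destroy the inclusion $A_{\prec\phi(o)}\subseteq A-\phi(o)$ and a different (swapping-style) analysis is used. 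Two secondary differences: you decompose $\OPT$ by algorithmic fate ($\OPT\cap B$, addable-but-small $O_1$, blocked $O_2$) rather than by the value-based set $C$, which forces you to prove $\Delta\le f(\OPT)$ explicitly via a surviving top-$(d+1)$ element --- a fact the paper also needs, but only implicitly, to justify that every element of $C\setminus B$ really did fail a feasibility test; and you establish a per-realization inequality before taking expectations, while the paper carries expectations throughout (harmless, since the core inequalities are pointwise). Net effect: your route replaces the paper's self-contained Hall machinery (Appendix C) with one appeal to a standard exchange theorem, which is arguably cleaner, at the cost of importing that theorem as a black box.
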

\begin{proof}
We start by analyzing the size of the summary that our algorithm returns: the two sets $A$ and $B$. Set $A$ is independent set in matroid $\cM$ so its size is no more than the rank of $\cM$, therefore, $|A| \leq k$. Set $B$ is the union of $V_d$ and $\bigcup_{\tau \in T} B_{\tau}$. Each set $B_{\tau}$ has at most $\frac{d}{\eps}$ element and there are at most $\frac{\log k}{\eps}$ such sets. Set $V_d$ contains $d$ elements. Therefore
\[
|A| + |B| \leq d + k + \frac{d \log k}{\eps^2}\,.
\]
We focus now on bounding the approximation guarantee of our approach. To that end, we fix any set $D$ with $|D| \le d$ and bound the ratio between the expected value of $f(S)$ and $f(\OPT)$ (we omit the dependence on $D$ since it is clear from the context). Let $A' = A \setminus D$ and $B' = B \setminus D$. We also define
\[
C = \left\{e \in V \setminus D \mid  f(e\mid A) \geq \eps \cdot \frac{ f(\OPT)}{k}\right\}.
\]
Intuitively, $C$ contains all the {\emph{important}} elements: the remaining elements do not increase the value of the submodular function considerably if added to $A$. Therefore, we do not lose much by ignoring them. By submodularity and monotonicity, we have that 
\begin{align}
    \nonumber
    f(\OPT) \le& \E{f(\OPT \cup A)} \\
    \nonumber
    \le & \E{f(A)} + \E{f(\OPT \cap C \mid A)}
     + \E{f(\OPT \setminus C \mid A)} \\
    \nonumber
    \le & \E{f(A)} + \E{f(\OPT \cap C \cap B' \mid A)} \\
    \label{eq:partition_OPT}
    &+  \E{f((\OPT \cap C) \setminus B' \mid A)}  
    + \E{f(\OPT \setminus C \mid A)}
\end{align}

Recall that $S$ is the solution returned by \cref{alg:phase-II}, we now use its expected value to bound four terms in the last inequality, starting from $\E{f(A)}$. Intuitively, for any element that is part of $A$ the probability of it being deleted is $O(\eps)$, since it is sampled from $d/\eps$ {\em similar} elements uniformly at random and only $d$ elements are deleted. We formalize this idea in \cref{lem:robustness} and show that 
\begin{align} \label{eq:main-offline-1}
    \E{f(A)} \le (1+3\eps )\E{f(A')} \le (1+3\eps ) \E{f(S)}.
\end{align}

For the second term, observe that $\OPT \cap C \cap B' \in \cM$ and is contained in the set of elements passed to \BBAlg, so its value is dominated by $\beta$ times the value of the of $\BBAlg(A' \cup B')$, all in all:
\begin{align} 
    \E{f(\OPT \cap C \cap B' \mid A)} \le
    \beta \cdot \E{f(S)}. \label{eq:main-offline-2}
\end{align}
Bounding the third term is more involved. The goal is to show that  \begin{align} \label{eq:main-offline-3}
\E{f((\OPT \cap C) \setminus B' \mid A)} \leq (1+\eps) \E{f(A)}.
\end{align} 
This statement basically bounds the approximation guarantee of our algorithm in case there are no deletions as well, showing that it is almost a $2$-approximate in that case. The formal argument is provided in \cref{lemma:offline-third-eq} and here we only provide an intuitive proof. 
The elements in $(\OPT \cap C) \setminus B'$ have high contribution with respect to $A$ (definition of set $C$) but are not in $B'$, therefore they were discarded in \Cref{alg:centralized-phase-I} due to the matroid constraint. It is possible to show that in this situation, for each element $e \in (\OPT \cap C) \setminus B'$ there exists a unique element $e' \in A$ such that $f(e | A) \leq (1+\eps)f(e'|A_{e'})$ for a suitable subset $A_{e'}$ of $A$. By a careful telescopic argument, it is possible to finally derive \cref{eq:main-offline-3}. Note that $\E{f(A)}$ has already been bounded in \Cref{eq:main-offline-1}.

The fourth term refers to at most $k$ elements and can be bounded based on the definition of $C$: any element $e$ outside of $C$ is such that $f(e|A) \leq \eps \cdot \frac{f(\OPT)}{k},$ by submodularity we have then that
\begin{align}
\label{eq:main-offline-4}
    \E{f(\OPT \setminus C \mid A)} \le \sum_{e \in \OPT \setminus C} f(e\mid A) \le \eps \cdot f(\OPT).
\end{align}
 Plugging \cref{eq:main-offline-1,eq:main-offline-2,eq:main-offline-3,eq:main-offline-4} into \cref{eq:partition_OPT} we obtain
\begin{align*}
    f(\OPT) &\le \E{f(A)} + \E{f(\OPT \cap C \cap B' \mid A)}
+   \E{f((\OPT \cap C) \setminus B' \mid A)}  
+ \E{f(\OPT \setminus C \mid A)} \\
&\le \left(2 + \eps \right)\E{f(A)} + \beta \cdot \E{f(S)} + \eps \cdot f(\OPT) \tag*{(Due to \cref{eq:main-offline-2,eq:main-offline-3,eq:main-offline-4})} \\
&\le \left(2 + \eps \right)(1+3\eps)\E{f(A')} + \beta \cdot \E{f(S)} + \eps \cdot f(\OPT) \tag*{(Due to \cref{eq:main-offline-1})}\\
&\le \left[\left(2 + \eps \right)(1+3\eps) + \beta\right] \cdot \E{f(S)} + \eps \cdot f(\OPT) \tag*{(Definition of $S$)}
\end{align*}
Rearranging terms, we get
\begin{equation*}
    f(\OPT) \le \frac{\left(2 + \eps \right)(1+3\eps) + \beta}{1-\eps} \cdot \E{f(S)} \le \left[ 2 + \beta +  \left(2 \beta + 15 \right) \cdot \eps \right] \cdot \E{f(S)},
\end{equation*}
where in the last inequality we used that $(1-\eps)^{-1} \le (1+2\eps)$ for all $\eps \in (0,0.5)$ and that $\frac{(2 + \eps )(1+3\eps)}{1-\eps} \le 2 + 15 \eps$ for all $\eps \in (0,\frac 13)$. The theorem then follows since as long as $\beta$ is a constant we have $\left(2 \beta + 15 \right) \cdot \eps \in O(\eps)$.
\end{proof}

As consequences of the previous Theorem, we get an approximation factor $4$ if we use as subroutine the greedy algorithm \citep{fisher78-II} or $2 + \tfrac{e}{e-1} \leq 3.58$ if we use continuous greedy as in \citet{CalinescuCPV11}. This last result, together with a better analysis of the dependency on $\eps$, is summarized in the following corollary.

\begin{corollary}
\label{cor:centralized}
    Fix any constant $\delta \in (0,1)$, then there exists a constant $C_{\delta}$ such that for any $\eps \in (0,\delta)$, in expectation a ${(3.582+\eps \cdot C_{\delta})}$-approximation algorithm with summary size $O(k + \frac{d \log k}{\eps^2})$ exists.
\end{corollary}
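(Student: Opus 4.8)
The plan is to derive the corollary directly from \cref{thm:robust-centralized} by instantiating the black-box routine \BBAlg in \cref{alg:phase-II} with the continuous greedy algorithm of \citet{CalinescuCPV11} (followed by a rounding step such as pipage or swap rounding). Continuous greedy returns, in expectation, a $(1-1/e-\eps_0)$-approximate independent set for monotone submodular maximization over a matroid, where $\eps_0>0$ is an internal precision parameter that affects only the running time of Phase~II and not the summary size. Hence the summary size stays $O(k + \frac{d\log k}{\eps^2})$, and the whole task reduces to controlling the approximation ratio as a function of $\eps$.

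The single structural fact that makes the corollary possible is the strict inequality
\[
2 + \frac{e}{e-1} = 3.58197\ldots < 3.582 .
\]
I would exploit this slack as follows: set the continuous greedy precision $\eps_0$ to a fixed constant small enough that $\beta := (1-1/e-\eps_0)^{-1}$ still satisfies $2+\beta < 3.582$, which is feasible by continuity since $\beta \to e/(e-1)$ as $\eps_0 \to 0$. With this choice $\beta$ is an absolute constant, and it is exactly the quantity playing the role of the auxiliary approximation ratio in \cref{thm:robust-centralized}.

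Next I would invoke the bound established inside the proof of \cref{thm:robust-centralized}, namely that the approximation ratio is at most
\[
\alpha \;=\; \frac{(2+\eps)(1+3\eps)+\beta}{1-\eps}.
\]
Writing $A = 2+\beta < 3.582$ and expanding $(2+\eps)(1+3\eps)=2+7\eps+3\eps^2$, the numerator is $A + 7\eps + 3\eps^2$. For $\eps \in (0,\delta)$ one has $\eps^2 \le \delta\eps$ and $(1-\eps)^{-1} \le 1 + \frac{\eps}{1-\delta}$, so splitting $\alpha = A(1-\eps)^{-1} + (7\eps+3\eps^2)(1-\eps)^{-1}$ yields
\[
\alpha \;\le\; A + \frac{A+7+3\delta}{1-\delta}\,\eps \;\le\; 3.582 + C_\delta\,\eps,
\]
with $C_\delta = \frac{A+7+3\delta}{1-\delta}$ an explicit constant depending only on $\delta$, where the last step uses $A < 3.582$. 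This is precisely the claimed guarantee, and the blow-up of $C_\delta$ as $\delta \to 1$ simply reflects the factor $(1-\eps)^{-1}$.

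The step I expect to be most delicate is the handling of the continuous-greedy precision: one cannot pretend $\beta = e/(e-1)$ exactly, since the routine attains only $(1-1/e-\eps_0)^{-1} > e/(e-1)$, and the argument closes \emph{only} because the target $3.582$ is strictly larger than $2+e/(e-1)$, leaving a tiny ($\sim 2\cdot 10^{-5}$) margin into which the finite-precision error is absorbed. An equivalent and more practical route is to tie $\eps_0$ to $\eps$ itself, e.g.\ $\eps_0 = c\,\eps$ for a small fixed $c < 1-1/e$; then $\beta = (1-1/e-c\eps)^{-1} \le e/(e-1) + O(\eps)$ via the expansion $(1-x)^{-1}\le 1+2x$, and the extra $O(\eps)$ merges into $C_\delta\,\eps$. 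A secondary point worth verifying is that continuous greedy guarantees its ratio only in expectation over its own randomness; since the inequality $f(\OPT\cap C\cap B'\mid A)\le \beta f(S)$ of \cref{eq:main-offline-2} is applied for each fixed realization of the Phase~I sets $A',B'$, conditioning on the first-phase randomness and then taking total expectation keeps the composition valid, so no independence beyond what the theorem already uses is required.
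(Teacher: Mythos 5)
There is a genuine gap, and it sits exactly at the point the corollary exists to address: the range of $\eps$. The bound you invoke from the proof of \Cref{thm:robust-centralized}, namely $\alpha = \frac{(2+\eps)(1+3\eps)+\beta}{1-\eps}$, rests on \Cref{eq:main-offline-1}, i.e.\ $\E{f(A)} \le (1+3\eps)\E{f(A')}$, and \Cref{lem:robustness} establishes that inequality only for $\eps \in (0,\tfrac13)$. What the lemma gives for general $\eps \in (0,1)$ is the factor $\frac{1+\eps}{1-\eps}$, and for $\eps > \tfrac13$ one has $\frac{1+\eps}{1-\eps} > 1+3\eps$, so the $(1+3\eps)$ bound is simply not available there. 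Since the corollary is stated for an arbitrary constant $\delta \in (0,1)$ --- and this regime genuinely matters, as the paper's experiments run with $\eps = 0.99$ --- your argument as written proves the claim only for $\delta \le \tfrac13$. The paper's proof avoids this by substituting the all-$\eps$ inequality \Cref{eq:eps_1} into the theorem's derivation, obtaining $f(\OPT) \le \left[(2+\eps)\frac{1+\eps}{1-\eps}+\beta\right]\E{f(S)} + \eps \cdot f(\OPT)$, and then rearranging to get a constant $C_\delta$ of order $(1-\delta)^{-2}$. Note the quadratic blow-up in $(1-\delta)^{-1}$ there, versus the linear blow-up in your $C_\delta = \frac{A+7+3\delta}{1-\delta}$: that discrepancy is itself a symptom that your derivation dropped a factor that is needed when $\eps$ is not small.

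Two smaller remarks. Your care with the continuous-greedy precision $\eps_0$ --- either absorbing $(1-1/e-\eps_0)^{-1}$ into the slack $3.582 - (2+\tfrac{e}{e-1})$, or tying $\eps_0 = c\,\eps$ so the error merges into $C_\delta\,\eps$ --- is a legitimate refinement that the paper glosses over (it simply plugs in $\beta = \tfrac{e}{e-1}$), and your observation that \Cref{eq:main-offline-2} composes correctly with the Phase~I randomness by conditioning is also fine. But neither repairs the range issue; to fix your proof, replace the $(1+3\eps)$ factor by $\frac{1+\eps}{1-\eps}$ throughout and redo the closing algebra, which is exactly the paper's route.
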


\begin{proof}
    The first step in the proof of \Cref{thm:robust-centralized} where we use that $\eps$ is bounded below $1/3$ is when we use \Cref{eq:main-offline-1}. \Cref{eq:main-offline-1} is proved in \Cref{lem:robustness}, where a similar result, i.e., \Cref{eq:eps_1}, holding for any $\eps \in (0,1)$ is also stated. 
    Using the latter inequality and following the same steps as in the proof of \Cref{thm:robust-centralized}, we obtain that 
    \[
        f(\OPT) \le \left[\left(2 + \eps \right)\frac{1+\eps}{1-\eps} + \beta\right] \cdot \E{f(S)} + \eps \cdot f(\OPT)       
    \]
    We use as optimization routine \BBAlg continuous greedy, therefore we can plug in $\beta = \frac{e}{e-1}$ and, rearranging the terms we obtain
    \begin{align*}
        f(\Sopt) &\le \left(\frac{e}{(e-1)(1-\eps)} + \frac{(2+\eps)(1+\eps)}{(1-\eps)^2}\right) \E{f(S)}\\
        &\le \Big{[}\frac{e}{e-1} + 2  + \underbrace{\frac{8 e - 7 -\delta(2e-1)}{(e - 1) (\delta - 1)^2}}_{C_{\delta}} \cdot \eps \Big{]}\E{f(S)}.
    \end{align*}
    The last inequality can be numerically verified and holds for any  $\eps \in (0,\delta).$
\end{proof}

\section{Streaming Setting} \label{section:streaming}

\begin{algorithm}[t]
\caption{Streaming Algorithm Phase I} \label{alg:streaming-phase-I}
\begin{algorithmic}[1]
\STATE \textbf{Input:} Precision $\varepsilon$, deletion parameter $d.$
\STATE $T \gets \{(1+\varepsilon)^{i} \mid i \in \mathbb{Z}\}$
\STATE $A \gets \emptyset$, $V_d \gets \emptyset$, $B_{\tau} \gets \emptyset$ for all $\tau \in T$ 
\STATE $V_d \gets $ the first $d$ arriving elements
\STATE $\Delta \gets 0$, $\tau_{min} \gets 0$
\FOR{every arriving element $e'$}
    \IF{$f(e') > \min_{x \in V_d} f(x)$}
        \STATE Add $e'$ to $V_d$, pop element $e$ with smallest value
    \ENDIF
    \STATE \textbf{else} $e \gets e'$
    \STATE $\Delta \gets \max\{f(e), \Delta\}$, $\tau_{min} \gets \frac{\varepsilon}{1+\varepsilon} \cdot \frac{\Delta}{k}$
    \STATE Remove from $T$ all the thresholds smaller than $\tau_{min}$ and delete the relative $B_{\tau}$
    \STATE \textbf{if} $\tau_{min} > f(e\mid A)$ \textbf{then} Discard $e$
    \STATE Find the largest threshold $\tau \in T$ s.t. $f(e\mid A)\ge \tau$
    \STATE Add $e$ to $B_{\tau}$
    \WHILE{$\exists \tau \in T$ such that $|B_{\tau}| \geq \frac{d}{\varepsilon}$
    }
        \STATE\label{line:sample}Remove one element $g$ from $B_\tau$ u.a.r. 
        \STATE $w(g) \gets f(g\mid A)$
        \IF{$A + g \in \I$}
            \STATE $A \gets A + g$
        \ELSE
            \STATE $k_g \gets \argmin\{ w(k) \mid k \in C(A + g)\}$
            \IF{$w(g) > 2 \cdot w(k_g)$}
                \STATE $A \gets A + g - k_g$
                \STATE \textbf{Update} $\{B_{\tau}\}$ according to $A$
            \ENDIF
        \ENDIF
    \ENDWHILE
\ENDFOR
\STATE $B \gets V_d \cup_{\tau \in T} B_{\tau}$
\RETURN $A, B$
\end{algorithmic}
\end{algorithm}

In this section we present our algorithm for Deletion Robust Submodular Maximization in the streaming setting. In this setting, the elements of $V$ in the first phase arrive on a stream and we want to compute the summary $W$ with limited memory. 
Our approach consists in carefully mimicking the swapping algorithm \citep{ChakrabartiK15} in a deletion robust fashion; to that end we maintain an independent candidate solution $A$ and buckets $B_{\tau}$ that contain {\em small} reservoirs of elements from the stream with similar marginal contribution each element with respect to the current solution $A.$ Beyond the $B_{\tau}$, an extra buffer $V_d$ containing the best $d$ elements seen so far is kept. 

Before explaining how these sets are updated in streaming setting, let us elaborate two of the challenges that we face. The first phase of the centralized algorithm, Algorithm~\ref{alg:centralized-phase-I}, recomputes the sets $B_{\tau}$ every time that an element is selected to be added to set $A$. This recomputation is very powerful since it ensures that all elements that can be added to $A$ without violating the matroid constraint and have high marginal gain with respect to $A$ are added to $B$. Therefore, we process them in the order of their contribution. This cannot be achieved in the streaming setting as we cannot keep all the elements and the order that the elements in the stream is not depend on their marginal gain. Moreover the set $A$ is changing overtime and as a results the contribution of the element changes as well. Therefore, keeping set $B$ up to date is challenging in streaming setting. Furthermore, the changes in $A$ can be problematic for the elements in $A$ as well. Consider the case that we add elements to set $A$ for lower value thresholds based on the elements of the stream. Afterwards, there are elements that can be added to higher value threshold. In this scenario, these elements cannot be skipped since they are very valuable and any {\emph{good}} solution needs them. Therefore, based on our previous approach we need to add them to $A$ (even if adding them violates the matroid constraint) and remove some elements to keep set $A$ independent. Notice that these removals change the contribution of the rest of the elements hence some low contribution elements with respect to $A$ can have high contribution after removals. 

We start explaining the algorithm by defining the thresholds $\tau$ that set $B_\tau$ is stored. In the streaming setting, we do not have an a priori estimate of $\OPT$, so that we do not know upfront which are the thresholds corresponding to {\em high quality} elements. This issue is overcome by initially considering {\em all} powers of $(1+\eps)$ and progressively removing the ones too small with respect to $\Delta$, i.e., the $(d+1)$-st largest value seen so far.
Every new element that arrives is first used to update $V_d$:  if its value is smaller then the minimum in $V_d$ then nothing happens, otherwise it is swapped with the smallest element in it. Then, the value of $\Delta$ is updated and all the buckets corresponding to thresholds that are too small are deleted, in order to maintain only a logarithmic number of active buckets. At this point, the new element is put into the correct bucket $B_{\tau}$, if such bucket still exists. 
Now, new elements are drawn uniformly at random from the buckets $B_{\tau}$ as long as no bucket contains more than $d/\eps$ elements. These drawn elements are added to $A$ if and only if it is either feasible to add them directly or they can be {\em swapped} with a {\em less important} element in $A$. To make this notion of importance more precise, each element in the solution is associated with a weight, i.e., its marginal value to $A$ when it was first considered to be added to the solution. An element in $A$ is swapped for a more promising one only if the new one has a weight at least twice as big, while maintaining $A$ independent. 

Every time $A$ changes, the buckets $B_{\tau}$ are completely updated so to maintain the invariant that $B_{\tau}$ contains only elements whose marginal value with respect to the current solution $A$ is within $\tau$ and $(1+\eps) \cdot \tau.$
This property is crucial to ensure the deletion robustness: every bucket contains elements that are similar, i.e., whose marginal density with respect to the current solution is at most a multiplicative $(1+\eps)$ factor away. When the stream terminates, the algorithm returns the candidate solution $A$ and $B$, containing $V_d$ and the surviving buckets. As in the centralized framework, $A$ and $B$ constitute together the deletion robust summary $W$ to be passed to \Cref{alg:phase-II}. The pseudocode of this algorithm is presented in \cref{alg:streaming-phase-I}. 
\begin{theorem}
\label{thm:robust-streaming}
    For $\eps \in (0, 1/3)$, Streaming Algorithm (\cref{alg:streaming-phase-I} and \cref{alg:phase-II}) is in expectation a $(4+ \beta + O(\eps))$-approximation algorithm with summary size and memory $O(k + \frac{d \log k}{\eps^2})$, where $\beta$ is the approximation ratio of the auxiliary algorithm \BBAlg.
\end{theorem}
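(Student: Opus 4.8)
My plan is to mirror the structure of the centralized proof (\Cref{thm:robust-centralized}) as closely as possible, isolating the one place where the streaming algorithm is genuinely weaker. First I would bound the summary size and memory: the candidate solution $A$ is always independent, so $|A| \le k$; there are $O(\frac{\log k}{\eps})$ active thresholds at any time, each bucket holds at most $d/\eps$ elements, and $V_d$ holds $d$ elements, giving $|A| + |B| = O(k + \frac{d\log k}{\eps^2})$. Since we only ever store $A$, the active buckets, and $V_d$ during the stream, the memory is bounded by the same quantity. Then I would fix an arbitrary deletion set $D$ with $|D| \le d$ and reuse the exact same decomposition as \Cref{eq:partition_OPT}, partitioning $\OPT$ into $A$, $\OPT \cap C \cap B'$, $(\OPT \cap C)\setminus B'$, and $\OPT \setminus C$, where $C$ is the set of elements whose marginal to the final $A$ is at least $\eps f(\OPT)/k$.

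Three of the four terms should carry over essentially verbatim. The robustness bound $\E{f(A)} \le (1+3\eps)\E{f(A')} \le (1+3\eps)\E{f(S)}$ (the analogue of \Cref{eq:main-offline-1}) should still hold, since every element of $A$ was sampled uniformly at random from a bucket of size at least $d/\eps$ of $(1+\eps)$-similar marginals, so the deletion argument of \Cref{lem:robustness} applies unchanged to the final solution $A$. The term $\E{f(\OPT \cap C \cap B' \mid A)} \le \beta\,\E{f(S)}$ (analogue of \Cref{eq:main-offline-2}) is immediate because $\OPT \cap C \cap B'$ is a surviving independent set fed to \BBAlg. The low-value term $\E{f(\OPT\setminus C \mid A)} \le \eps f(\OPT)$ (analogue of \Cref{eq:main-offline-4}) follows from submodularity and the definition of $C$ exactly as before.

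The heart of the proof, and the main obstacle, is the third term $\E{f((\OPT \cap C)\setminus B' \mid A)}$, the analogue of \Cref{eq:main-offline-3}. In the centralized case the monotone decreasing sweep over thresholds let us build a clean injective charging of each rejected optimum element to a distinct element of $A$ with no more than a $(1+\eps)$ loss. In the streaming case this fails: because elements arrive in arbitrary order and $A$ evolves via swaps, an element $e \in (\OPT\cap C)\setminus B'$ was rejected relative to some intermediate solution, and the element that blocked it may itself have been swapped out later, triggering a chain of removals. This is exactly what the factor-$2$ swap rule (line~23, $w(g) > 2 w(k_g)$) is designed to control: I would set up a charging argument in which each rejected optimum element $e$ is charged to the current element $k$ of the final $A$ that ``inherited'' its matroid slot, and then bound $f(e\mid A)$ by the weight $w(k)$ up to the $(1+\eps)$ bucket slack. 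The geometric $2\times$ gap guarantees that the weights along any chain of displacements form a geometric series, so the total weight charged to a single surviving element $k$ is at most a constant (roughly $2$) times $w(k)$, while $\sum_k w(k)$ telescopes to a constant times $f(A)$. I expect this to yield a bound of the form $\E{f((\OPT\cap C)\setminus B'\mid A)} \le (2+O(\eps))\,\E{f(A)}$ — the extra additive $2$ over the centralized $(1+\eps)$ is precisely the price of the streaming swaps and accounts for the jump from $2+\beta$ to $4+\beta$ in the approximation ratio.

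Finally I would combine the four bounds in \Cref{eq:partition_OPT}, substitute $\E{f(A)} \le (1+3\eps)\E{f(S)}$, collect terms, and rearrange to isolate $f(\OPT)$, obtaining
\[
f(\OPT) \le \frac{(4 + O(\eps)) + \beta}{1-\eps}\,\E{f(S)} \le \left(4 + \beta + O(\eps)\right)\E{f(S)},
\]
using $(1-\eps)^{-1} \le 1+2\eps$ on $(0,1/3)$ as in the centralized proof. The only genuinely new technical content is the chain-charging bound for the third term; the remaining steps are faithful adaptations of the centralized analysis.
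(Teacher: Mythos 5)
Your proposal has genuine gaps, and they all stem from one root cause: it ignores the set $K$ of elements that were swapped \emph{out} of the solution during the stream. You condition every term on the final $A$ and define $C$ via marginals to $A$, exactly as in the centralized proof; but in the streaming algorithm the solution does not grow monotonically, so the intermediate solutions $A_t$ are subsets of $A \cup K$, \emph{not} of the final $A$. This breaks your plan in at least two places. First, an element $e$ discarded mid-stream because $f(e \mid A_t) < \tau_{\min}$ (or dropped during a bucket update) can have a large marginal with respect to the final $A$, because the elements of $A_t$ that covered it were later swapped out. Such an element lies in your $C$, is in neither $B'$ nor $A$, and therefore lands in your third term --- yet it was never rejected by the matroid, carries no weight, and occupies no ``slot,'' so your chain-charging has nothing to charge it to, and its marginal to $A$ is not controlled by any of your bounds. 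Second, even for elements that do have weights, the very first step of your charging needs $f(e \mid A) \le w(e) = f(e \mid A_e)$, which follows from submodularity only if $A_e \subseteq A$ --- false once swaps occur. The paper's proof is engineered around exactly this: $C$ and every conditioning are taken with respect to $A \cup K$ (so that $A_t \subseteq A \cup K$ restores submodularity), and the third-term bound (\Cref{lem:greedy_matroid}, via Theorem~1 of Varadaraja) is a statement about the linear weight function $w$, not about marginals to $A$.

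The same issue undermines your robustness step. Your claim that \Cref{lem:robustness} ``applies unchanged'' to give $\E{f(A)} \le (1+3\eps)\E{f(A')}$ is unjustified: that proof uses the fact that the marginal of each added element to the prefix of the \emph{final} $A$ equals its marginal at insertion time, hence is sandwiched in $[\tau, (1+\eps)\tau]$; this holds only because the centralized $A$ never loses elements. After swaps, prefix marginals of the final $A$ can greatly exceed the insertion-time weight, so $f(A)$ cannot be upper-bounded by bucket thresholds at all. The paper proves robustness only for the weight function, $\E{w(A)} \le (1+3\eps)\E{f(A')}$ (\Cref{lem:streaming_robust}), and converts to $f$ via $f(A \cup K) \le w(A \cup K) \le 2\,w(A)$ (\Cref{lem:killed_elements} and \Cref{lem:w(AcupK)_vs_f(AcupK)}), paying a factor $2$ on the \emph{first} term. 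This also explains why your accounting is internally inconsistent: your four stated bounds sum to $(3+\beta+O(\eps))\E{f(S)}$, not the $(4+\beta+O(\eps))$ you conclude with. In the paper the jump from $2+\beta$ to $4+\beta$ splits as $+1$ on the first term (the killed elements $K$) and $+1$ on the third term (the swap charging yields $2\E{f(S)}$ instead of the centralized $(1+\eps)\E{f(A)}$), not $+2$ on the third term alone. Your intuition that the factor-$2$ swap rule controls displacement chains geometrically is the right idea for the third term, but as written the proposal does not close these gaps.
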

\begin{proof}
    We start by bounding the memory of the algorithm. Sets $A$ and $V_d$ always contains at most $k$, respectively $d$, elements. Every time a new element of the stream is considered, all the active $B_{\tau}$ contain at most $d/\eps$ elements; furthermore there are always at most $O(\log(k)/\eps)$ of them. This is ensured by the invariant that the active thresholds are smaller than $\Delta$ (by submodularity) and larger than $\tau_{\min}.$ Overall, the memory of the algorithm and the summary size is $O(k + d\log k/\eps^2)$.
    
    As in the analysis of \Cref{thm:robust-centralized}, we now fix any set $D$ and study the relative expected performance of our algorithm. To do so, we need some notation: let $K$ be the set of all elements removed from the solution $A$ at some point; moreover, for any element $g$ added to the solution, let $A_g$ denote the candidate solution when $g$ was added (possibly containing the element $k_g$ that was swapped with $g$). Given $A$ and $K$ we can define the set of important elements:
    \[
        C = \left\{e \in V \setminus D \mid  f(e \mid A \cup K) \geq \eps \cdot \frac{ f(\OPT)}{k}\right\}.
    \]
    Arguing similarly to the centralized case,
    we have that 
    \begin{equation}
    \label{eq:str-small-els}
        f(\OPT \setminus C \mid A \cup K) \le \eps \cdot f(\OPT).
    \end{equation}
    All the elements of the stream that were deleted because their marginal with respect to the current solution was smaller than $\tau_{min}$ are not in $C$ due to submodularity (this is also why we consider the marginal with respect to $A \cup K$ and not simply $A$ as in the centralized case).
    We first observe the following two properties hold (proofs in the appendix):
    \begin{enumerate}
        \item[(i)] $w(K) \le w(A) \le f(A)$ (\Cref{lem:killed_elements})
        \item[(ii)] $f(A \cup K) \le w(A \cup K)$  (\Cref{lem:w(AcupK)_vs_f(AcupK)})
    \end{enumerate}
    Moreover, the weight function can be also used to argue about the robustness of our algorithm. More formally, in \Cref{lem:streaming_robust} we show that \begin{equation}
    \label{eq:str-robustness}
        \E{w(A)} \le (1+3\eps) \E{f(A')}.
    \end{equation}
    We have all the ingredients to decompose $f(\OPT)$ exploiting the definitions of $K$ and $C$ and monotonicity of $f$:
    \begin{align}
    \nonumber
        f(\Sopt) \le& \E{f(\Sopt \cup A \cup K)} \\
    \nonumber
    =& \E{f(A \cup K)} +\E{f(\Sopt \cap C \cap B'\mid A \cup K)} \\
    \label{eq:str-decomposition}
        &+ \E{f((\Sopt \cap C) \setminus B' \mid A \cup K)} + \E{f(\Sopt \setminus C\mid A \cup K)}
    \end{align}
    The last term has already been addressed in \Cref{eq:str-small-els}, so let's focus on the remaining three.
    Start from the first one. Using properties (i) and (ii) we have
    \begin{align}
    \nonumber
        \E{f(A \cup K)} &\le \E{w(A \cup K)}\tag*{Property (ii)}
        \\ 
        &\le \E{w(A) + w(K)}
        \tag*{Definition of $w(\cdot)$}        \\ 
        &\le 2 \cdot \E{w(A)}
        \tag*{Property (i)}
        \\
        \nonumber
        &\le 2(1+3\eps) \E{f(A')} \tag*{\Cref{eq:str-robustness}}
        \\
        \label{eq:str-AcupK}
        &\le 2(1+3\eps) \E{f(S)}
    \end{align}
    The second term can be bounded similarly to \Cref{eq:main-offline-2}, using the assumption on $\BBAlg$, given that $\Sopt \cap C \cap B' \in \cM$ and is contained in $A' \cup B'$, where $\BBAlg$ achieves a $\beta$ approximation:
    \begin{align}
    \label{eq:str-OPTcapB}
        \E{f(\Sopt \cap C \cap B'\mid A \cup K)} \le \beta \cdot \E{f(S)}. 
    \end{align}
    The proof of the last term is more involved and is provided in \Cref{lem:greedy_matroid}.
    \begin{equation}
    \label{eq:str-swapping}
        \E{f((\Sopt \cap C) \setminus B'\mid A \cup K)} \le 2 \cdot \E{f(S)}.    
    \end{equation}
    Plugging \cref{eq:str-small-els,eq:str-AcupK,eq:str-OPTcapB,eq:str-swapping} into \cref{eq:str-decomposition} we obtain the following:
    \begin{align*}
        f(\Sopt) \le& \E{f(A \cup K)}
        +\E{f(\Sopt \cap C \cap B'\mid A \cup K)} \\
        &+ \E{f((\Sopt \cap C) \setminus B' \mid A \cup K)}+ \E{f(\Sopt \setminus C\mid A \cup K)}\\
        \le& \left[2(1+3\eps) + \beta  + 2\right]\cdot\E{f(S)} + \eps \cdot f(\OPT)
    \end{align*}
    Rearranging terms, we get
    \begin{equation*}
        f(\OPT) \le \frac{4+6\eps + \beta}{1-\eps} \cdot \E{f(S)} \le \left[ 4 + \beta +  \left(2 \beta + 15 \right) \cdot \eps \right] \cdot \E{f(S)},
    \end{equation*}
    where in the last inequality we used that $(1-\eps)^{-1} \le (1+2\eps)$ for all $\eps \in (0,0.5)$ and that $\frac{4 + 6\eps }{1-\eps} \le 4 + 15 \eps$ for all $\eps \in (0,\frac 13)$. The theorem follows since as long as $\beta$ is a constant it holds that $\left(2 \beta + 15 \right) \cdot \eps \in O(\eps)$.
\end{proof}

As a consequence of the previous Theorem, we get an approximation factor $6$ if we use as subroutine $\BBAlg$ the greedy algorithm \citep{fisher78-II} or $4 + \tfrac{e}{e-1} \leq 5.582$ if we use continuous greedy as in \citet{CalinescuCPV11}. Similar to the previous section, we wrap up the last result and a better analysis on the $\eps$ term in the following corollary.
\begin{corollary}
\label{cor:streaming}
    Fix any constant $\delta \in (0,1)$, then there exists a constant $G_{\delta}$ such that for any $\eps \in (0,\delta)$, in expectation a ${(3.582+\eps \cdot G_{\delta})}$-approximation algorithm with memory and summary size $O(k + \frac{d \log k}{\eps^2})$ exists.
\end{corollary}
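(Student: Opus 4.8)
The plan is to prove this corollary by the same argument that establishes \Cref{cor:centralized}, now starting from the decomposition \Cref{eq:str-decomposition} and the bounds \Cref{eq:str-OPTcapB,eq:str-swapping,eq:str-small-els} from the proof of \Cref{thm:robust-streaming}, and invoking a sharper robustness estimate. The single place where the restriction $\eps<1/3$ enters that proof is the robustness inequality \Cref{eq:str-robustness}, $\E{w(A)}\le(1+3\eps)\E{f(A')}$, which is the input to \Cref{eq:str-AcupK}; this is established in \Cref{lem:streaming_robust}, which---exactly as in the centralized case---also provides a companion bound valid for every $\eps\in(0,1)$ (the streaming analog of \Cref{eq:eps_1}). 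First I would re-run the decomposition feeding in this $\eps\in(0,1)$ robustness bound in place of \Cref{eq:str-robustness}, leaving the treatment of the other three terms untouched.

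Collecting terms exactly as in \Cref{cor:centralized} then yields the master inequality
\[
    f(\OPT) \le \left[(2+\eps)\cdot\frac{1+\eps}{1-\eps} + \beta\right]\cdot\E{f(S)} + \eps\cdot f(\OPT),
\]
in which all the $\eps$-dependence is concentrated in a single rational function whose denominator is a power of $(1-\eps)$. Moving the term $\eps\,f(\OPT)$ to the left-hand side, dividing by $(1-\eps)$, and substituting $\beta=\tfrac{e}{e-1}$ for continuous greedy \citep{CalinescuCPV11} produces the prefactor $\tfrac{e}{(e-1)(1-\eps)}+\tfrac{(2+\eps)(1+\eps)}{(1-\eps)^2}$, whose value at $\eps=0$ is $2+\tfrac{e}{e-1}\le 3.582$, precisely the constant in the statement.

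The only remaining work---and the main technical obstacle---is the same one dispatched at the end of \Cref{cor:centralized}: converting this rational prefactor into a clean affine upper bound $3.582+\eps\cdot G_\delta$ that holds \emph{uniformly} over $\eps\in(0,\delta)$, for an explicit $G_\delta$ depending only on $\delta$. I would handle it identically, bounding $\tfrac{1}{1-\eps}\le\tfrac{1}{1-\delta}$ on the interval, expanding the product, and absorbing every super-linear power of $\eps$ into the linear coefficient using $\eps\le\delta$; since the prefactor coincides with the one appearing in \Cref{cor:centralized}, the resulting constant is the same explicit rational function of $\delta$, namely $G_\delta=\tfrac{8e-7-\delta(2e-1)}{(e-1)(\delta-1)^2}$, and the final affine inequality is verified numerically on $(0,\delta)$ exactly as before.
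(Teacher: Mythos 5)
Your overall strategy (substitute the $\eps\in(0,1)$ robustness bound \Cref{eq:eps3} of \Cref{lem:streaming_robust} for \Cref{eq:str-robustness}, recollect the terms, then linearize the rational prefactor uniformly on $(0,\delta)$) is the same as the paper's, but the master inequality you write down is not the one those terms produce: it is the \emph{centralized} master inequality, and it does not follow from the streaming bounds you cite. In the streaming decomposition \Cref{eq:str-decomposition}, the first term carries a factor of $2$, since $\E{f(A\cup K)} \le \E{w(A\cup K)} \le \E{w(A)}+\E{w(K)} \le 2\,\E{w(A)}$ (\Cref{lem:w(AcupK)_vs_f(AcupK)} and \Cref{lem:killed_elements}), and the third term is bounded by the flat constant $2\,\E{f(S)}$ via the swapping argument (\Cref{eq:str-swapping}, proved in \Cref{lem:greedy_matroid}), \emph{not} by $(1+\eps)\E{f(A)}$ as in the centralized charging lemma---that lemma is unavailable here precisely because $A$ changes by swaps during the stream, which is why the paper introduces $K$ at all. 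Collecting the actual streaming terms gives
\[
    f(\OPT) \le \left[2\,\frac{1+\eps}{1-\eps} + \beta + 2\right]\cdot\E{f(S)} + \eps\cdot f(\OPT),
\]
whose limit after rearranging, as $\eps \to 0$, is $4+\beta = 4 + \tfrac{e}{e-1} \approx 5.582$, not $2+\beta$. Accordingly, your claimed constant (the centralized $C_\delta$) is also not the right one; the paper obtains $G_{\delta} = \frac{9e-8-\delta(5e-4)}{(e-1)(1-\delta)^2}$ from the prefactor $2\frac{1+\eps}{(1-\eps)^2} + \frac{3e-2}{(1-\eps)(e-1)}$.

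A second point worth flagging: the ``$3.582$'' you match is almost certainly a typo in the corollary statement itself---the abstract and \Cref{thm:robust-streaming} give $(4+\beta+O(\eps))\approx 5.582$ for streaming, and the paper's own proof of this corollary derives $\frac{e}{e-1}+4+G_\delta\cdot\eps$. Nor can you rescue $3.582$ by falling back on the centralized algorithm, since the corollary requires \emph{memory} $O(k + \frac{d\log k}{\eps^2})$, which the centralized algorithm (needing access to the full ground set) does not provide. So your derivation reproduces the typo rather than a provable bound, and the precise step that fails is ``collecting terms exactly as in \Cref{cor:centralized}'': for the streaming algorithm, that collection is simply not what \Cref{eq:str-AcupK,eq:str-OPTcapB,eq:str-swapping,eq:str-small-els} yield.
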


\begin{proof}
    The first step in the proof of \Cref{thm:robust-streaming} where we need that $\eps$ is bounded below $1/3$ is in when we use \Cref{eq:str-robustness}, that is formally proven in \Cref{lem:streaming_robust}. In the same Lemma, it is also present a bound, \Cref{eq:eps3}, which holds for any $\eps \in (0,1)$. Using it and following the exact same steps of the proof of \Cref{thm:robust-streaming}, we obtain that 
    \[
        f(\OPT) \le \left[2\frac{1+\eps}{(1-\eps)} + \beta  + 2\right]\cdot\E{f(S)} + \eps \cdot f(\OPT)    
    \]
    We use as optimization routine \BBAlg continuous greedy, therefore we can plug in $\beta = \frac{e}{e-1}$ and, rearranging the terms we obtain
    \begin{align*}
        f(\Sopt) &\le \left(2\frac{1+\eps}{(1-\eps)^2} + \frac{3e  -2}{(1-\eps)(e-1)}\right) \E{f(S)}\\
        &\le \Big{[}\frac{e}{e-1} + 4  + \underbrace{\frac{9 e - 8 -\delta(5e-4)}{(e-1) (1-\delta)^2}}_{G_{\delta}} \cdot \eps \Big{]}\E{f(S)}.
    \end{align*}
    The last inequality can be numerically verified and holds for any  $\eps \in (0,\delta).$
\end{proof}

\section{Experiments}

    In the experiments we evaluate the performance of our deletion robust centralized and streaming algorithms on real world data.
    Prior to our work, the only deletion robust algorithm for submodular maximization subject to general matroid constraints were algorithms obtained through the black-box reduction of \citet{MirzasoleimanK017}, which runs $\Theta(d)$ copies of a (non-robust) streaming algorithm as a subrountine.
    The natural choice for a practical optimization subroutine would be the \swapping algorithm, i.e., Algorithm 1 in \citet{ChakrabartiK15}. 
    Instead of implementing this \robswap algorithm in our experiments, we decided to consider an {\em omniscient} version of the \swapping algorithm, i.e., \omnswap, that knows (and avoids) the elements that are going to be deleted by the adversary. The choice of this benchmark has three reasons. First, \robswap is designed for a more stringent notion of deletion robustness, and it would have been unfair to compare it to our two phases algorithm; second, \robswap requires $\Theta(dk)$ memory and summary size, which rapidly approaches $n$ as we consider large $D$ in the experiments. Finally, note that the approximation guarantees for \robswap and \omnswap are identical.
    Similarly, in the centralized setting we consider an omniscient version of the classic lazy greedy algorithm \citep{fisher78-II,Minoux78}, which we refer to \omngreedy, that runs lazy greedy on the surviving elements $V \setminus D$. For further details on the benchmarks, as well as on the datasets, we refer the reader to \Cref{sec:app_exp}.

     To simulate the adversary, we run the lazy greedy algorithm to find and delete a high value independent set. In this way, we make sure that (i) the deleted set has high value, therefore increasing the difficulty of recovering a high value sketch, (ii) the deletions are ``evenly'' spread according to the matroid constraint, e.g., all partitions of a partition matroid are equally interested. Notice that this does not unfairly affect the baselines as they know the elements that are going to be deleted in advance. All the experiments were run on a common computer, and running them on any other device \emph{would not} affect the results in any way. In the following, we explain the experimental setup (matroid constraint, submodular function and datasets used) and present the results.

    \paragraph{Interactive Personalized Movie Recommendation.}
        Movie recommendation systems are one of the common experiments in the context of submodular maximization  \citep[e.g.,][]{MitrovicBNTC17,Norouzi-FardTMZ18,HalabiMNTT20, AmanatidisFLLR20,AmanatidisFLLMR21}. In this experiment, we have a large collection $M$ of movies from various genres $G_1, G_2, \dots, G_k$ and we want to design a deletion robust recommendation system that proposes to users one movie from each genre. A natural motivation for deleted elements is given by movies previously watched by the user, or for which the user has a strong negative opinion.\footnote{It is clear that this class of deletions cannot be captured by \emph{randomized} deletions and a stronger model is needed.} The recommendation system has to quickly update its suggestions as the user deletes the movies. 
        We use the MovieLens 1M database \citep{movielens16}, that contains 1000209 ratings for 3900 movies by 6040 users. Based on the ratings, it is possible to associate to each movie $m$, respectively user $u$, a feature vector $v_m$, respectively  $v_u$.
        More specifically, we complete the users-movies rating matrix and then extract the feature vectors using a singular value decomposition and retaining the first $30$ singular values \citep{TroyanskayaCSBHTBA01}. Following the literature \citep[e.g.,][]{MitrovicBNTC17}, we measure  the quality of a set of movies $S$ with respect to user $u$ (identified by her feature vector $v_u$), using the following monotone submodular objective function:
        \[
            f_u(S) = (1-\alpha) \sum_{s \in S} \scalar{v_u}{v_s}_+ + \alpha \cdot \sum_{m \in M} \max_{s \in S} \scalar{v_m}{v_s},
        \]
        where $\scalar{a}{b}_+$ denotes the positive part of the scalar product.
        The first term is linear and sum the {\em predicted scores} of user $u$ for the movies in $S$, while the second term has a facility-location structure and is a proxy for how well $S$ {\em covers} all the movies. Finally, parameter $\alpha$ balances the trade off between the two terms; in our experiments it is set to $0.95$. Further details are provided in \cref{sec:app_exp}.
        
    \paragraph{Influence Maximization.}
        In this experiment, we study deletion robust influence maximization on a social network graph \citep{Norouzi-FardTMZ18,HalabiMNTT20}. We use the Facebook dataset from \citet{McAuleyL12} that consists of $4039$ nodes $V$ and $88234$ edges $E$. As measure of influence we consider the simple dominating function, that is monotone submodular 
        \[
        f(S) = |\{v \in V: \exists s \in S \text{ and } (s,v) \in E\}|.
        \]
        The vertices of the dataset are already partitioned into $11$ subsets, the so-called circles, and we consider the problem of selecting at most $8$ vertices while choosing at most one vertex from each circle.\footnote{In the original dataset there are elements belonging to more than one circle, for each one of them we pick one of these circles uniformly at random and associate the element only to it.} This way, the constraint can be modeled as the intersection of a partition and a uniform matroid, which is still a matroid.

\begin{figure}[t]
	\captionsetup[subfigure]{aboveskip=0.5pt}
	\centering
	\begin{subfigure}{.33\textwidth}
		\centering
 		\scalebox{0.33}{\includegraphics{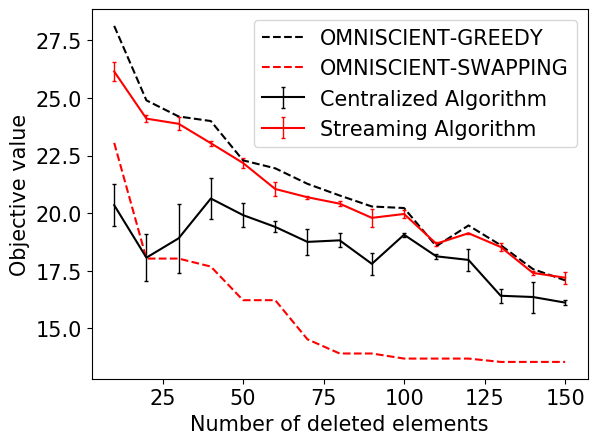}}
		\caption{\footnotesize Recommendation on MovieLens}
		\label{fig:movieLens_obj}
	\end{subfigure}\hspace{0.1pt}%
	\begin{subfigure}{.33\textwidth}
		\centering
		\scalebox{0.33}{\includegraphics{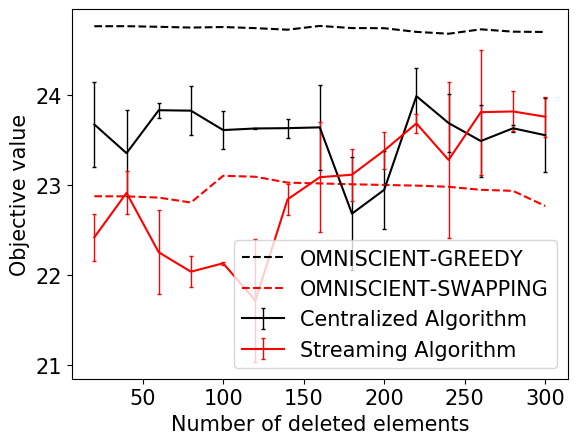}}
		\caption{\footnotesize Kernel log-det on RunInRome}
		\label{fig:run_logdet_obj}
	\end{subfigure}\hspace{0.1pt}%
	\begin{subfigure}{.33\textwidth}
		\centering 		\scalebox{0.33}{\includegraphics{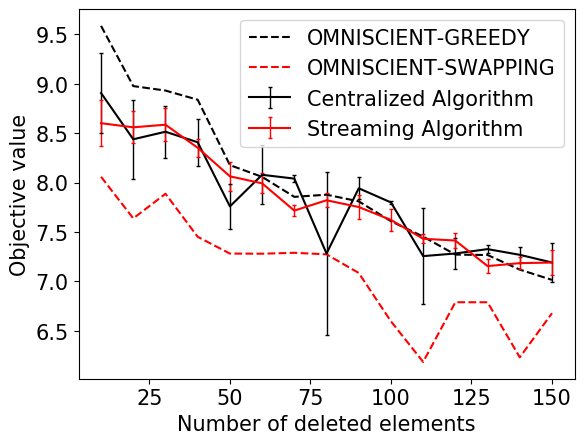}}
		\caption{\footnotesize Kernel log-det on Uber}
		\label{fig:uber_logdet_obj}
	\end{subfigure}
	\vspace{3pt}\\
	\begin{subfigure}{.33\textwidth}
		\centering
 		\scalebox{0.33}{\includegraphics{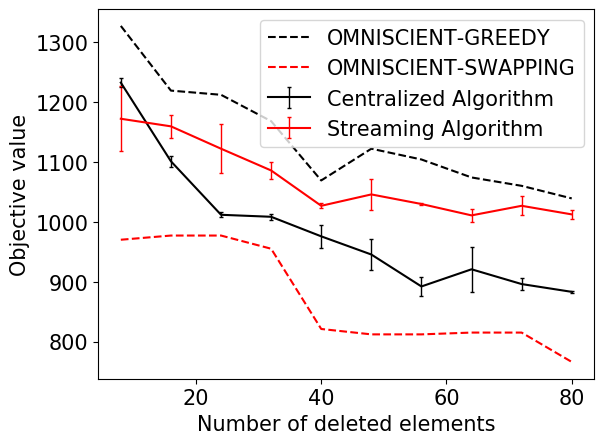}} 
        \caption{\footnotesize Influence Maximization}
		\label{fig:influence_obj}
	\end{subfigure}\hspace{0.1pt}%
	\begin{subfigure}{.33\textwidth}
		\centering
		\scalebox{0.33}{\includegraphics{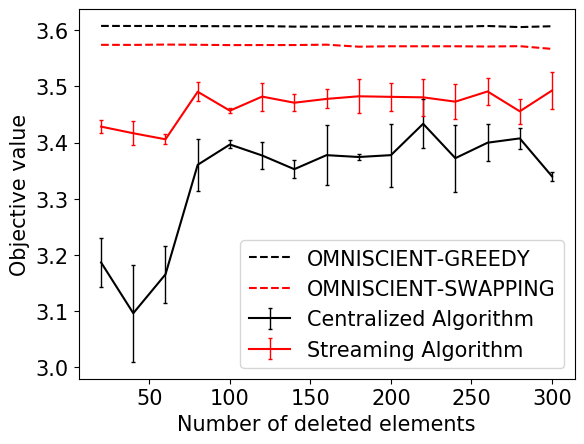}} 
		\caption{\footnotesize K-medoid on RunInRome}
		\label{fig:run_kmedoid_obj}
	\end{subfigure}\hspace{0.1pt}%
	\begin{subfigure}{.33\textwidth}
		\centering
 		\scalebox{0.33}{\includegraphics{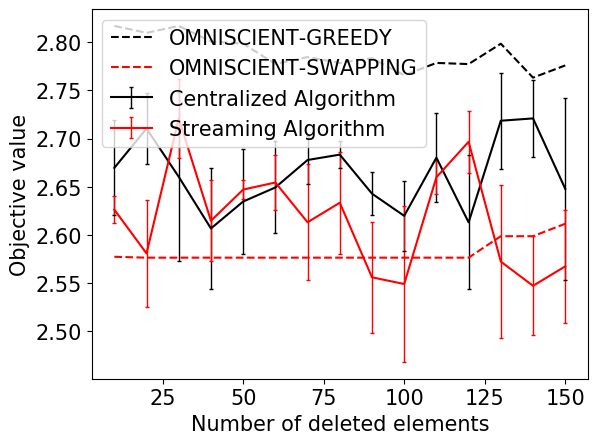}}
		\caption{\footnotesize K-medoid on Uber}
		\label{fig:uber_kmedoid_obj}
	\end{subfigure}
 	\vspace{-5pt}%
	\caption{\small The value of the objective (submodular function $f$) for Centralized and Streaming algorithms compared to the benchmarks \omngreedy and \omnswap with respect to $|D|$. Average and standard deviation over three runs are reported.} 
	\label{fig:objective}
\end{figure}
                
    \paragraph{Summarizing Geolocation Data.}
        The third experimental setting that we consider is the problem of tracking geographical coordinates while respecting privacy concerns. When the sequence of coordinates correspond, for example, to a trajectory, this is inherently a streaming problem, since the coordinates are arriving on a long sequence as the object of interest moves and the goal is to summarizes the path by keeping a short summary.
        To respect the privacy, some of these coordinates cannot be used in the summary and the user can freely choose them. We model these behaviour by deleting these elements from the solution to ensure privacy. In our experiments, we use two datasets. RunInRome \citep{Federico22}, that contains $8425$ positions recorded by running activity in Rome, Italy and a random sample ($10351$ points) from the Uber pickups dataset \citep{Uber14}.
        
        To measure the quality of our algorithms, we consider two different objective functions used in geographical data summarization: the $k$-medoid and the kernel log-det. 
        Consider the $k$-medoid function on the metric set $(V,d)$: 
        \[
            L(S) = \frac{1}{|V|} \sum_{v \in V} \min_{e \in S} d(e,v).
        \]
        By introducing an auxiliary point $e_0 \in V$ we can turn $L$ into a monotone submodular function $f(S) = L(e_0) - L(S + e_0)$ \citep{MirzasoleimanKSK13}. In our experiment we set $e_0$ to be the first point of each dataset.
        
        For the second objective, consider a kernel matrix $K$ that depends on the pair-wise distances of the points, i.e. $ K_{i,j} = \exp\{-\frac{d(i,j)^2}{h^2}\}$ 
        where $d(i,j)$ denotes the distance between the $i^{th}$ and the $j^{th}$ point in the dataset and $h$ is some constant. Following \citet{Krause14,MirzasoleimanK017,KazemiZK18}, a common monotone submodular objective is $f(S) = \log \det (I + \alpha K_{S,S}),$ where $I$ is the $|S|-$dimensional identity matrix, $K_{S,S}$ is the principal sub-matrix corresponding to the entries in $S$, and $\alpha$ is a regularization parameter (that we set to $10$ in the experiments).
        
        The data points of both datasets are partitioned in geographical areas and the goal is to maximize $f$ using at most two data points from each one of them. This can be modeled with a laminar matroid, so our theoretical results apply.

    \paragraph{Experimental Results.}
    In \cref{fig:objective} we present the objective values for our  algorithms with $\varepsilon = 0.99$ (denoted by Centralized Algorithm and Streaming Algorithm) and the benchmarks: \omngreedy and \omnswap. We report the average and standard deviation over three runs.  Choosing $\varepsilon = 0.99$ for our algorithms guarantees that the size of the summary is extremely small, i.e., it is at most $4d$, while its average is around $2d$. Smaller values of $\varepsilon$ result in better solutions but larger summary sizes. The results for a range of $\varepsilon$ are presented in \cref{sec:app_exp}. Across all datasets, our deletion robust algorithm for the centralized setting typically obtains at least 90-95\% of the value of the omniscient benchmark that knows the deletions in advance.
    
    Our deletion robust algorithm for the streaming setting even outperforms its omniscient counterpart in all experiments (up to $20\%$ in some cases), except for \Cref{fig:run_kmedoid_obj}, where it is up to $2\%$ lower. One of the reasons that our streaming algorithm outperforms its benchmark is the extra memory that it is allowed to use. 
    
\section{Conclusion and Future Work}

    We presented the first space-efficient constant-factor approximation algorithms for deletion robust submodular maximization over matroids in both the centralized and the streaming setting. 
    In extensive experiments, we observed that the quality of our algorithms is actually competitive with that of ``all-knowing'' benchmarks that know the deleted elements in advance. A natural direction for future work is to extend these results and ideas to possibly non-monotone objectives, other constraints (e.g., multiple matroids and knapsack), and to consider fully dynamic versions with insertions and deletions.    

    \section*{Acknowledgement}
    Federico Fusco's work was partially supported by the ERC Advanced 
    Grant 788893 AMDROMA ``Algorithmic and Mechanism Design Research in 
    Online Markets'' and the MIUR PRIN project ALGADIMAR ``Algorithms, Games, 
    and Digital Markets.''

    \bibliographystyle{abbrvnat}
    \bibliography{cite}
    \appendix

\section{Proofs Omitted in \cref{section:cent}}
\label{app:centralized}
In this section, we prove the lemmas that are used in \cref{section:cent} and we inherit the notation from that section.
\begin{lemma}
\label{lem:robustness}
    For $\eps \in (0,1)$, we have 
    \begin{equation}
    \label{eq:eps_1}
        \E{f(A)} \le \frac{1+\eps}{1-\eps}\ \E{f(A')}. \tag{\ref{eq:main-offline-1}a}
    \end{equation}
    If $\eps \in (0,\frac 13)$, we the previous inequality implies that
    \begin{equation}
        \E{f(A)} \le (1+3\varepsilon )\E{f(A')}. \tag{\ref{eq:main-offline-1}}
    \end{equation}
\end{lemma}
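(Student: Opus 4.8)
The plan is to exploit the order in which elements enter $A$ together with the structure of the pools from which they are drawn. Write $A = \{g_1, \ldots, g_m\}$ in the order \Cref{alg:centralized-phase-I} adds them, let $A_{<j} = \{g_1, \ldots, g_{j-1}\}$ be the partial solution right before $g_j$ is selected, and set $w_j = f(g_j \mid A_{<j})$, so that by telescoping $f(A) = \sum_{j=1}^m w_j$ holds deterministically. The first thing I would establish is a \emph{bucket regularity} property: at the moment $g_j$ is drawn, it is chosen uniformly at random from a set $B_{\tau_j}$ of size at least $d/\eps$, and every $e \in B_{\tau_j}$ satisfies $\tau_j \le f(e \mid A_{<j}) < (1+\eps)\tau_j$. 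The lower bound is simply the membership condition defining $B_{\tau_j}$. The upper bound follows because thresholds are swept in decreasing order: any element that is feasible (i.e. $A + e \in \cM$) and has marginal at least $(1+\eps)\tau_j$ — the \emph{previous}, larger threshold — was already collected and removed from $V$ when that larger threshold was processed, while the additions made since then only decrease marginals by submodularity. Hence every surviving element in $B_{\tau_j}$ has marginal strictly below $(1+\eps)\tau_j$ with respect to the current $A_{<j}$.

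Given this, I would lower-bound $f(A') = f(A \setminus D)$ by reinserting the surviving elements in their original order: since $A_{<j} \setminus D \subseteq A_{<j}$, submodularity gives $f(g_j \mid A_{<j}\setminus D) \ge f(g_j \mid A_{<j}) = w_j$, so \emph{pointwise} $f(A') \ge \sum_{j : g_j \notin D} w_j$. It then remains to show $\E{\sum_{j : g_j \notin D} w_j} \ge \tfrac{1-\eps}{1+\eps}\E{f(A)}$, which I would argue term by term. Condition on the history $\mathcal H_j$ that fixes $A_{<j}$, $\tau_j$, and the pool $B_{\tau_j}$ but not the draw of $g_j$; conditionally $g_j$ is uniform on $B_{\tau_j}$. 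Since each marginal is at least $\tau_j$ and $|B_{\tau_j}\setminus D| \ge (1-\eps)|B_{\tau_j}|$ (because $|B_{\tau_j}\cap D|\le d \le \eps\,|B_{\tau_j}|$), we get $\E{w_j\,\one[g_j\notin D] \mid \mathcal H_j} = \tfrac{1}{|B_{\tau_j}|}\sum_{e\in B_{\tau_j}\setminus D} f(e\mid A_{<j}) \ge (1-\eps)\tau_j$. On the other hand, bucket regularity yields $\E{w_j\mid\mathcal H_j}\le (1+\eps)\tau_j$, i.e. $\tau_j \ge \tfrac{1}{1+\eps}\E{w_j\mid\mathcal H_j}$, and combining the two bounds gives $\E{w_j\,\one[g_j\notin D]\mid\mathcal H_j} \ge \tfrac{1-\eps}{1+\eps}\E{w_j\mid\mathcal H_j}$. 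Taking expectations, summing over the (random number of) additions, and using $f(A)=\sum_j w_j$ establishes the first displayed inequality of the lemma. The stated consequence for $\eps\in(0,\tfrac13)$ is then the elementary fact $\tfrac{1+\eps}{1-\eps}\le 1+3\eps$, equivalent to $\eps(1-3\eps)\ge 0$.

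I expect the main obstacle to be making the bucket regularity property fully rigorous — in particular the claim that no $\cM$-feasible element with current marginal at least $(1+\eps)\tau_j$ survives in $V$ once threshold $\tau_j$ is reached. This requires careful bookkeeping of how $A$, $V$, and the $B_\tau$ evolve across the decreasing sweep, leaning on two monotonicity facts: marginals are non-increasing as $A$ grows (submodularity), and feasibility for $\cM$ can only be lost, never gained, as $A$ grows (downward-closedness, since $A_{<j}+e\in\cM$ implies $A^{+}+e\in\cM$ for any $A^{+}\subseteq A_{<j}$). A secondary, more routine issue is the filtration: because $m$ and each $B_{\tau_j}$ are random, one must condition on exactly the $\sigma$-algebra generated just before each uniform draw, which is precisely the structure provided by the while-loop of \Cref{alg:centralized-phase-I}.
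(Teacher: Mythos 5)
Your proof is correct and takes essentially the same route as the paper's proof of \Cref{lem:robustness}: telescope $f(A)$ in order of addition, use the $[\tau,(1+\eps)\tau]$ regularity of the buckets coming from the decreasing threshold sweep, lower-bound $f(A')$ by the surviving elements' original marginals via submodularity, and use that each uniform draw from a pool of size at least $d/\eps$ avoids $D$ with probability at least $1-\eps$. The only difference is bookkeeping — the paper aggregates per threshold, comparing $\E{|A_\tau \setminus D|}$ against $\E{|A_\tau|}$ and sandwiching $f(A)$ between $\sum_\tau \tau|A_\tau|$ and $(1+\eps)\sum_\tau \tau|A_\tau|$, whereas you fuse the value and probability bounds into a single per-draw conditional-expectation inequality — and both yield the identical $\tfrac{1+\eps}{1-\eps}$ factor.
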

\begin{proof}
    The proof is similar to that of Lemma 2 in \citet{kazemi17arxiv}. For the sake of the analysis, for each threshold $\tau \in T$, let $A_{\tau}$ denote the set of elements in $A$ that were added during an iteration of the for loop corresponding to threshold $\tau$ in \cref{alg:centralized-phase-I}. Moreover, let $n_{\tau} = |A_{\tau}|$ and order the elements in $A_{\tau} = <x_{1,\tau},x_{2,\tau},\dots, x_{n_{\tau},\tau}>$ according to the order in which they are added to $A_{\tau}.$ Finally, let $I(\ell,\tau)$ be the indicator random variable corresponding to the element $x_{\ell,\tau}$ not being in $D$. Notice that the randomness here is with respect to the random draw of the algorithm in Phase I: $D$ is fixed but unknown to the algorithm. The crucial argument is that $x_{\ell,\tau}$ is drawn uniformly at random from a set of cardinality at least $d/\varepsilon$ where at most $d$ elements lie in $D$, hence 
    \begin{equation}
    \label{eq:robust_proba}
        \P{I(\ell,\tau)} \ge 1- \varepsilon.
    \end{equation}
    As a starting point, we can decompose the value of $A$ as follows using submodularity:
        \begin{align*}
        f(A) &= \sum_{\tau \in T} \sum_{\ell=1}^{n_{\tau}} f(x_{\ell,\tau}|\cup_{t > \tau}A_t \cup \{  x_{1,\tau}, x_{2,\tau}, \dots, x_{\ell-1,\tau}\}).
    \end{align*}
    Recall now that the elements added in a specific iteration of the for loop share the same marginal up to an $(1+\varepsilon)$ factor, i.e.,
    \begin{equation} \label{eq:at_cont}
        \tau \le f(x_{\ell,\tau}|\cup_{t>\tau}A_t \cup \{  x_{1,\tau}, x_{2,\tau}, \dots, x_{\ell-1,\tau}\}) \le \tau \cdot (1+\varepsilon), \quad \forall \tau \in T.        
    \end{equation}

    Summing up those contributions, we have
    \begin{equation}
    \label{eq:robust_ineq}
       \sum_{\tau \in T} |A_{\tau}| \cdot \tau \le f(A) \le (1+\varepsilon) \sum_{\tau \in T} |A_{\tau}| \cdot \tau.  
    \end{equation}
    
    We now decompose in a similar way the value of $A' = A \setminus D$. We let $I(\ell,i) \cdot X$ to be the set $X$ if the indicator variable is $1$ and the empty set otherwise. We also define $A'_{\tau} = A_{\tau} \setminus D = A_{\tau} \cap A'$, we have
    \begin{align*}
        f(A') &= \sum_{\tau \in T} \sum_{\ell=1}^{n_{\tau}} I(\ell,\tau) \cdot f(x_{\ell,\tau}|\cup_{t > \tau}A_t \cup \{I(1,\tau) \cdot x_{1,\tau}, I(2,\tau) \cdot x_{2,\tau}, \dots, I(\ell-1,\tau) \cdot x_{\ell-1,\tau}\})\\
        &\ge \sum_{\tau \in T} \sum_{\ell=1}^{n_\tau} I(\ell,\tau)\cdot f(x_{\ell,\tau}|\cup_{t > \tau}A_t \cup \{x_{1,\tau}, x_{2,\tau}, \dots, x_{\ell-1,\tau}\})\\
        &\ge \sum_{\tau \in T} \sum_{\ell=1}^{n_{\tau}} I(\ell,\tau) \cdot \tau\\
        &=\sum_{\tau \in T} |A'_{\tau}| \cdot \tau,
    \end{align*}
    where the first inequality follows by submodularity and the second one follows from \cref{eq:at_cont}. We apply the expected value to the previous inequality, which results in
    \[
        \E{f(A')} \ge \sum_{\tau \in T} \E{|A'_{\tau}|} \cdot \tau \ge (1-\varepsilon)\sum_{\tau \in T} \E{|A_{\tau}|} \cdot \tau \ge \frac{1-\varepsilon}{1
        + \varepsilon}\E{f(A)},
    \]
    where the second inequality follows by linearity of expectation and \cref{eq:robust_proba}, while the last one from the right hand side of  \cref{eq:robust_ineq}. The Lemma follows observing that $(1+3\varepsilon) \ge \frac{1+\varepsilon}{1
        - \varepsilon}$, for all $\varepsilon \in (0,\frac 13)$.
\end{proof}

\begin{lemma} \label{lemma:offline-third-eq}
    We have,
    \[
        \E{f((\OPT \cap C) \setminus B' \mid A)} \leq (1+\eps) \E{f(A)}.
    \]
\end{lemma}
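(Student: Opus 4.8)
The plan is to prove the inequality \emph{pointwise}, for every realization of the algorithm's internal randomness (with $D$ fixed), and only take expectations at the very end. Write $O = (\OPT \cap C) \setminus B'$. Elements of $O$ that already lie in $A$ contribute nothing to $f(\cdot \mid A)$, so I may discard them and assume every $e \in O$ has $e \notin A$. Since $e \in C$ we have $e \notin D$, and then $e \notin B'$ forces $e \notin B = V_d \cup \bigcup_\tau B_\tau$; in particular $e \notin V_d$, so $f(e) \le \Delta$, and $e$ was never placed in any bucket nor sampled into $A$. Thus $e$ is a \emph{leftover} that survived in $V$ to the end of \Cref{alg:centralized-phase-I}. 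First I would show the matroid, not the marginal, discarded it: since deleting at most $d$ elements leaves one of the top $d+1$ values intact, $f(\OPT) \ge \Delta$, whence $f(e \mid A) \ge \eps f(\OPT)/k \ge \eps\Delta/k \ge \tau_0$, where $\tau_0$ is the smallest threshold of $T$. A leftover with marginal at least $\tau_0$ was excluded from the final bucket $B_{\tau_0}$ only because $A + e \notin \cM$.

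Consequently $O$ is independent (being a subset of $\OPT$) and, by the previous step, contained in the closure $\cl(A)$. Extending $O$ to a base of the restricted matroid on $\cl(A)$ and applying the strong (Brualdi) basis-exchange theorem against the base $A$, I obtain an injection $\phi\colon O \to A$ such that $A - \phi(e) + e \in \cM$ for every $e \in O$.

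The heart of the argument, and the step I expect to be \textbf{hardest}, is the value bound $f(e \mid A) \le (1+\eps)\, f(\phi(e)\mid A_{\phi(e)})$, where $A_{\phi(e)}$ is the solution just before $\phi(e)$ was inserted. Set $e' = \phi(e)$, inserted at threshold $\tau'$, so $f(e' \mid A_{e'}) \ge \tau'$. Suppose for contradiction that $f(e \mid A) > (1+\eps)\tau'$, and let $\sigma \in T$ be the largest threshold with $\sigma \le f(e \mid A)$; then $\sigma > \tau'$, so $\sigma$ is swept strictly before $\tau'$, and because $A$ only grows in the centralized algorithm, every intermediate solution $A''$ arising while $\sigma$ is active satisfies $A'' \subseteq A_{e'} \subseteq A - e'$. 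Downward-closedness applied to $A - e' + e \in \cM$ then gives $A'' + e \in \cM$, so $e$ is never blocked by the matroid during $\sigma$'s sweep, while submodularity keeps $f(e\mid A'') \ge f(e \mid A) \ge \sigma$. Hence $e$ is a legitimate member of $B_\sigma$ throughout, and it would eventually either be sampled into $A$ or be relegated to $B$ once $|B_\sigma|$ drops below $d/\eps$ — either way contradicting that $e$ is a leftover. Therefore $f(e \mid A) \le (1+\eps)\tau' \le (1+\eps)\, f(e' \mid A_{e'})$.

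Finally I would sum over $O$. Submodularity gives $f(O \mid A) \le \sum_{e\in O} f(e \mid A)$; the value bound together with injectivity of $\phi$ turns this into $(1+\eps)\sum_{e'\in A} f(e'\mid A_{e'})$; and since the sets $A_{e'}$ are exactly the prefixes produced as $A$ is built one element at a time, this last sum telescopes to $f(A)$. Taking expectations over the algorithm's randomness yields $\E{f((\OPT\cap C)\setminus B' \mid A)} \le (1+\eps)\,\E{f(A)}$, as claimed.
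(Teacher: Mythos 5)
Your proof is correct, and while it shares the paper's overall skeleton (show the rejected elements of $\OPT \cap C$ were blocked by the matroid, build an injection into $A$, prove a $(1+\eps)$ marginal comparison, telescope), its two technical pillars are obtained by genuinely different means. The paper constructs a \emph{time-respecting} injection: it proves a bespoke Hall-type matching lemma (\cref{lem:hall-lemma}, via Hall's marriage theorem plus closure/contraction arguments) to obtain $h(x_i) \in F_i$, where $F_i$ is the snapshot of $A$ at the moment $x_i$ first failed the feasibility test; the $(1+\eps)$ comparison is then essentially immediate, since $x_i$ was still feasible when $h(x_i)$ was drawn from a bucket at threshold $\tau$, and the decreasing sweep caps $f(x_i \mid A_i)$ at $(1+\eps)\tau$. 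You instead apply Brualdi's basis-exchange theorem to the \emph{final} solution $A$, viewed as a base of the restriction of $\cM$ to $\cl(A)$: this yields an injection $\phi$ carrying no temporal information, only the exchange property $A - \phi(e) + e \in \cM$, and you then recover the missing temporal content by contradiction, pushing the exchange property backward in time via downward-closedness to show that $e$ would have persisted in $B_\sigma$ for the threshold $\sigma \le f(e \mid A)$ and hence could not be a leftover. What your route buys: it replaces the paper's custom matching lemma with an off-the-shelf classical theorem, the whole argument is cleanly pointwise in the algorithm's randomness (so the expectation step is trivial), and you make explicit a step the paper glosses over, namely that $f(\OPT) \ge \Delta$ guarantees a leftover in $C$ always clears the smallest threshold, so only the matroid can have rejected it. What the paper's route buys: matching directly into the failure snapshots $F_i$ makes the value comparison a one-line observation, with all combinatorial work isolated in a self-contained lemma. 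One caveat common to both arguments: $f(\OPT) \ge \Delta$ tacitly assumes the surviving top-valued elements are independent as singletons (no loops in $\cM$); since the paper's claim that every element of $C \setminus B$ ``must'' have failed a feasibility test needs the same fact, this is not a gap relative to the paper.
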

\begin{proof}
    Let us order the elements in $\Sopt \cap C = \{x_1, x_2, \dots \}$ according to the order in which they were removed from some $B_{\tau}$ because of feasibility constraint (since they are in $C\setminus B$ it must be the case). Notice that once an element fails the feasibility test and is removed from some $B_{\tau}$, then it is never considered again. Furthermore, for each such $x_i$ let $F_i$ be the set $A$ when $x_i$ fails the feasibility test. We know that $F_i$ is independent since $A$ is independent during the execution of \Cref{alg:centralized-phase-I}; moreover, by definition, $F_i + x_i \not \in \cM $. By \cref{lem:hall-lemma} there exists an injective function $h:\Sopt \cap C \to A$ such that $h(x_i) \in F_{i}$ for all $i$. Let $A_i$ denote the set $A$ when the element $h(x_i)$ gets added to it. Notice that $A_i \subseteq F_i$ because $h(x_i)$ has been added to $A$ before $x_i$. We have
    \begin{align*}
        f((\Sopt \cap C) \setminus B'\mid A) &\le \sum_{x \in (\Sopt \cap C) \setminus B'} f(x|A) \\
        &\le \sum_{x_i \in \Sopt \cap C} f(x_i|A_i) \\
        &\le  \sum_{x_i \in \Sopt \cap C} (1+\varepsilon) \cdot f(h(x_i)|A_i) \\
        &\le  (1+\varepsilon) \cdot f(A),
    \end{align*}
    where the first two inequalities follow from submodularity and the fact that $A_i \subseteq A$ for all $i$. The third inequality uses the fact that if $x_i$ was still feasible to add when $h(x_i)$ was added, then their marginals are at most a $(1+\varepsilon)$ factor away. The last inequality follows from a telescopic decomposition of $A$, the monotonicity of $f$ and the fact that $h$ is injective. Applying the expectation to both extremes of the chain of inequalities gives the Lemma.
\end{proof}

\section{Proofs Omitted in \cref{section:streaming}}
\label{app:streaming}
In this section we present the proofs omit in \cref{section:streaming} and we inherit the notations from that section.
\begin{lemma}
\label{lem:killed_elements}
We have 
\begin{itemize}
    \item $w(K) \le w(A) \le f(A)$
    \item $w(A') \le f(A')$.
\end{itemize}
\end{lemma}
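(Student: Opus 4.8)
The plan is to establish the three inequalities separately, handling the two telescoping bounds $w(A) \le f(A)$ and $w(A') \le f(A')$ first, and leaving the charging bound $w(K) \le w(A)$ for last, as it is the crux of the lemma.

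For $w(A) \le f(A)$, I would order the elements of the final solution as $A = \{g_1, \dots, g_m\}$ by the time they were inserted. Since an element that survives to the end is never removed after its insertion, at the moment $g_i$ is added every earlier survivor is already present, i.e. $\{g_1, \dots, g_{i-1}\} \subseteq A_{g_i}$, where $A_{g_i}$ is the current solution just before inserting $g_i$ (possibly still containing the $k_{g_i}$ about to be swapped out). By the definition of the weight and submodularity,
\[
    w(g_i) = f(g_i \mid A_{g_i}) \le f(g_i \mid \{g_1, \dots, g_{i-1}\}),
\]
and summing over $i$ telescopes to $f(A)$. The bound $w(A') \le f(A')$ follows from the identical argument applied to the survivors in $A' = A \setminus D$: for each such element the earlier-inserted elements of $A'$ are again present at its insertion time, so submodularity and telescoping over $A'$ give the claim.

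The heart of the lemma is $w(K) \le w(A)$, for which I would use an amortized potential argument that tracks the quantity $w(A_t)$, the total assigned weight of the current solution, as the algorithm runs. Crucially, a weight $w(g)$ is fixed once and for all the moment $g$ is first drawn, and the only way the solution loses an element is the swap step of \cref{alg:streaming-phase-I}, which moves $k_g$ into $K$ while inserting $g$ under the guard $w(g) > 2 \cdot w(k_g)$. I would then compare, step by step, the increment of $w(K)$ with the increment of the potential $w(A_t)$: a direct insertion raises the potential by $w(g) \ge 0$ and leaves $w(K)$ unchanged, while a swap raises $w(K)$ by $w(k_g)$ and raises the potential by $w(g) - w(k_g) > w(k_g)$, where the strict inequality is exactly the factor-two guard. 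Hence at every step the potential grows by at least as much as $w(K)$; summing from the empty initialization gives $w(K) \le w(A_{\mathrm{final}}) = w(A)$.

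The main obstacle is the bookkeeping for this last step: one must verify the invariants of \cref{alg:streaming-phase-I} that weights are never reassigned, that swapped-out elements never re-enter the solution (so each element contributes to $w(K)$ at most once and the comparison of increments is valid), and that the bucket update following a swap does not alter the weights of elements already in $A$. Once these invariants are in place, the factor-two swap condition drives the whole argument, and—unlike the other two inequalities—no submodularity is needed here at all.
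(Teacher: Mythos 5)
Your proposal is correct and follows essentially the same route as the paper's proof: the bounds $w(A) \le f(A)$ and $w(A') \le f(A')$ are obtained by the identical insertion-order telescoping plus submodularity (using $\{a_1,\dots,a_{i-1}\} \subseteq A_{a_i}$), and your potential argument for $w(K) \le w(A)$ is exactly the paper's charging scheme (following Lemma 9 of Chakrabarti and Kale), where each swap increases the solution's weight by $w(g) - w(k_g) \ge w(k_g)$ thanks to the factor-two guard. The bookkeeping invariants you flag (weights fixed at first draw, killed elements never re-entering $A$) are indeed the implicit facts the paper relies on, so there is no gap.
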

\begin{proof}
    The proof of the first inequality is similar to the one of Lemma 9 in \citet{ChakrabartiK15}. Crucially, the weight function $w$ is linear and once an element enters $A$, its weight is fixed forever as the marginal value it contributed when entering $A$. During the run of the algorithm, every time an element $k_g$ is removed from $A$, the weight of $A$ increases by  $w(g) - w(k_g)$ by its replacement with some element $g$. Moreover, $w(k_g) \le w(g) - w(k_g)$ for every element $k_g \in K$ since $2w(k_g) \leq w(g)$. Summing up over all elements in $K$ it holds that 
    \[
          w(K) \le \sum_{k_g \in K}w(k_g) \leq \sum_{k_g \in K} \left[w(g) - w(k_g)\right] \le w(A).
    \]
    
    We show now the second inequality. Let $<a_1, a_2, \dots a_{\ell}>$ be the elements in $A$, sorted by the order in which they were added to $A$. We have that 
    \[
        f(A) = \sum_{i=1}^{\ell} f(a_i|\{a_1,\dots, a_{i-1}\}) \ge \sum_{i=1}^{\ell} f(a_i|A_{a_i}) = \sum_{i=1}^{\ell} w(a_i) = w(A),
    \]
    where $A_{a_i}$ is the solution set right before $a_i$ is added to $A$. The inequality follows from submodularity, since $\{a_1,\dots, a_{i-1}\} \subseteq A_{a_i}$.
    Similarly, let $I(a)$ the indicator random variable corresponding to $a \not \in D$, while $I(a)\cdot a$ is a shorthand for the element $a$ if $I(a)=1$ and the empty set otherwise.
    \begin{align*}
        f(A') &= \sum_{i=1}^{\ell} I(a_i)\cdot f(a_i|\{I(a_1)\cdot a_1,\dots, I(a_{i-1})\cdot a_{i-1}\}) \\
        &\ge \sum_{i=1}^{\ell}  I(a_i)\cdot f(a_i|\{a_1,\dots,a_{i-1}\}) \\
        &\ge \sum_{i=1}^{\ell}  I(a_i)\cdot f(a_i|A_{a_i}) = w(A').
    \end{align*}
\end{proof}

\begin{lemma}
\label{lem:w(AcupK)_vs_f(AcupK)}
    We have
    \[
        f(A \cup K) \le w(A \cup K).
    \]
\end{lemma}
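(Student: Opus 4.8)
The plan is to reduce everything to a single telescoping decomposition of $f(A \cup K)$ along the order in which elements were inserted into the candidate solution, mirroring the second part of \Cref{lem:killed_elements} but exploiting the \emph{opposite} direction of submodularity. The key structural observation is that every element of $A \cup K$ was inserted into $A$ exactly once during the run of \Cref{alg:streaming-phase-I}: the elements of $K$ are precisely those that were added and later swapped out (and, once in $K$, are never reconsidered), while the elements of $A$ are those that were added and survived. Hence there is a well-defined global order $g_1, g_2, \dots, g_m$ of the elements of $A \cup K$ according to their time of insertion, and I would fix this order at the outset.

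First I would write the chain-rule expansion
\[
    f(A \cup K) = \sum_{i=1}^{m} f\!\left(g_i \mid \{g_1, \dots, g_{i-1}\}\right),
\]
which holds for \emph{any} ordering, being a pure telescoping identity that does not yet invoke submodularity. Next, recall that $w(g_i) = f(g_i \mid A_{g_i})$, where $A_{g_i}$ is the candidate solution immediately before $g_i$ was inserted. The heart of the argument is the inclusion $A_{g_i} \subseteq \{g_1, \dots, g_{i-1}\}$: every element present in $A_{g_i}$ was inserted strictly before $g_i$ and therefore appears among $g_1, \dots, g_{i-1}$. In contrast to the analogous step for $A$ alone in \Cref{lem:killed_elements}, here the prefix $\{g_1,\dots,g_{i-1}\}$ is a \emph{superset} of $A_{g_i}$, because it also contains the elements that had already been removed; this is exactly the reason the inequality flips relative to that lemma.

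Applying submodularity to this inclusion term by term then yields
\[
    f\!\left(g_i \mid \{g_1, \dots, g_{i-1}\}\right) \le f(g_i \mid A_{g_i}) = w(g_i),
\]
since enlarging the conditioning set can only decrease a marginal. Summing over $i$ and using linearity of $w$ gives
\[
    f(A \cup K) \le \sum_{i=1}^{m} w(g_i) = w(A \cup K),
\]
which is the claim. I do not expect any serious obstacle here; the only point that requires care is justifying that each element of $A \cup K$ has a unique, unambiguous insertion time and that $A_{g_i}$ consists only of previously inserted (and not-yet-removed) elements, so that the crucial inclusion $A_{g_i} \subseteq \{g_1, \dots, g_{i-1}\}$ remains valid even in the presence of swaps.
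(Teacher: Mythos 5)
Your proposal is correct and follows essentially the same route as the paper's own proof: order the elements of $A \cup K$ by insertion time, expand $f(A \cup K)$ telescopically along that order, and use the inclusion $A_{g_i} \subseteq \{g_1,\dots,g_{i-1}\}$ (which holds because $A_{g_i}$ is exactly the set of earlier-inserted elements minus those already swapped out) together with submodularity to bound each marginal by $w(g_i)$. The only difference is presentational: you make explicit the well-definedness of insertion times and the contrast with the inequality direction in \Cref{lem:killed_elements}, points the paper mentions only in passing.
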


\begin{proof}
    Let $x_1, x_2, \dots x_{t}$ be the elements in $A \cup K$, sorted by the order in which they were added to $A$. We have that 
    \[
        f(A \cup K) = \sum_{i=1}^{t} f(x_i|\{x_1,\dots, x_{i-1}\}) \le \sum_{i=1}^{t} f(x_i|A_{x_i}) = \sum_{i=1}^{t} w(x_i) = w(A \cup K),
    \]
    where $A_{x_i}$ is the solution set right before $x_i$ is added to $A$. The inequality follows from submodularity, since $\{x_1,\dots, x_{i-1}\} \supseteq A_{a_i}$. The reason for the last inclusion is simple: $A_{x_i}$ contains all the elements entered in $A$ before $x_i$ minus those elements that have already been removed from it.
\end{proof}

\begin{lemma}
\label{lem:streaming_robust}
    For $\eps \in (0,1)$, we have 
    \begin{equation}
    \label{eq:eps3} 
        \E{w(A)} \le \frac{1+\eps}{1-\eps}\ \E{f(A')}.\tag{\ref{eq:str-robustness}a}
    \end{equation}
    If $\eps \in (0,\frac 13)$, the previous inequality implies that
    \begin{equation}
    \label{eq:eps4}
     \E{w(A)} \le (1+3\varepsilon )\E{f(A')}.\tag{\ref{eq:str-robustness}}
    \end{equation}
\end{lemma}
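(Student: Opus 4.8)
The plan is to first reduce the statement to a purely combinatorial robustness bound on the weight function, and then prove that bound by mirroring the conditioning argument of \Cref{lem:robustness}. Since \Cref{lem:killed_elements} already gives $w(A') \le f(A')$, it suffices to establish $\E{w(A)} \le \frac{1+\eps}{1-\eps}\,\E{w(A')}$, after which \cref{eq:eps3} follows immediately by chaining with $w(A')\le f(A')$, and \cref{eq:eps4} follows from $\frac{1+\eps}{1-\eps} \le 1 + 3\eps$ for $\eps \in (0,\tfrac13)$. The advantage of passing to $w$ is that it is linear and each element's weight is frozen at insertion time, so that $w(A) = \sum_{a \in A} w(a)$ and $w(A') = \sum_{a \in A} w(a)\,\one[a \notin D]$ both decompose cleanly over the elements currently in the solution.

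Next I would group the elements of the final solution $A$ by the threshold $\tau$ of the bucket $B_\tau$ from which they were drawn in \cref{line:sample}, writing $A_\tau$ for this group and $A'_\tau = A_\tau \setminus D$. The bucket invariant guarantees that every element placed in $B_\tau$ has marginal, hence frozen weight, in $[\tau,(1+\eps)\tau)$, so summing as in \cref{eq:robust_ineq} yields $\sum_\tau \tau\,|A_\tau| \le w(A) \le (1+\eps)\sum_\tau \tau\,|A_\tau|$ and $w(A') \ge \sum_\tau \tau\,|A'_\tau|$. The robustness core is identical in spirit to \cref{eq:robust_proba}: at the moment an element is sampled in \cref{line:sample} it is drawn uniformly from a bucket of size at least $d/\eps$, of which at most $d$ elements lie in the fixed, oblivious set $D$, so conditioning on the history preceding the draw, the sampled element lies in $D$ with probability at most $\eps$. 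Summing this over the random number of sampling events exactly as in the proof of \Cref{lem:robustness}, using that the event ``a given sampling slot is active'' is measurable with respect to the pre-draw history, would give $\E{|A'_\tau|} \ge (1-\eps)\E{|A_\tau|}$, and therefore $\E{w(A')} \ge \frac{1-\eps}{1+\eps}\,\E{w(A)}$.

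The step I expect to be the main obstacle is precisely this last passage, because the swapping operation breaks the clean structure of the centralized algorithm. In \Cref{alg:centralized-phase-I} an element added to $A$ stays forever, so the elements of $A_\tau$ are exactly the elements drawn at threshold $\tau$ and the per-slot deletion bound transfers verbatim. In the streaming algorithm a drawn element may fail the swap test or may later be evicted from $A$ into $K$, so membership of a sampled element in the \emph{final} $A_\tau$ is decided adaptively by events occurring after the draw. Consequently the indicator that a sampled element survives is not measurable with respect to the pre-draw history, and one must rule out that deletions correlate with survival through the (bucket-bounded) weights. I would address this by carrying out the accounting at the level of weight contributions rather than raw counts, and by exploiting that the algorithm never queries $D$, so that the eviction decisions depend on the stream and on the sampled identities but never on $D$-membership as such; the $(1+\eps)$ width of each bucket should then absorb the residual discrepancy between sampled and surviving elements. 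Once $\E{w(A')} \ge \frac{1-\eps}{1+\eps}\,\E{w(A)}$ is secured, combining it with $w(A') \le f(A')$ and simplifying $\frac{1+\eps}{1-\eps} \le 1+3\eps$ on $(0,\tfrac13)$ completes both \cref{eq:eps3} and \cref{eq:eps4}.
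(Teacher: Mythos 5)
Your plan is, up to its final step, exactly the paper's own proof of \Cref{lem:streaming_robust}: reduce everything to the linear weight function via $w(A') \le f(A')$ (\Cref{lem:killed_elements}), group the elements of the final solution by the bucket they were drawn from, sandwich $w(A)$ between $\sum_{\tau}\tau\,|A_{\tau}|$ and $(1+\eps)\sum_{\tau}\tau\,|A_{\tau}|$ and lower-bound $w(A')$ by $\sum_{\tau}\tau\,|A'_{\tau}|$, and prove $\E{|A'_{\tau}|} \ge (1-\eps)\E{|A_{\tau}|}$ by conditioning each draw on the pre-draw history, as in \Cref{lem:robustness}.

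The obstacle you then flag is a genuine gap, and neither your sketched repair nor the paper's own write-up closes it. In \Cref{alg:centralized-phase-I} the conditioning is sound because every bucket element is feasible by construction, so every drawn element is added, and the event ``a $t$-th element is added at threshold $\tau$'' is determined by the pre-draw history. In \Cref{alg:streaming-phase-I}, whether the drawn element $g$ enters $A$ depends on $g$'s identity through the feasibility-or-swap test; the paper's sentence ``when the algorithm decides to add an element to $A$ from $B_{\tau}$, the probability that the element belongs to $D$ is at most $\eps$'' treats the decision to add as if it preceded the draw, which is precisely the measurability failure you identify, so your suspicion pinpoints a real weakness rather than a missing standard trick. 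However, your proposed fix (accounting by weight contributions, with the $(1+\eps)$ bucket width absorbing the skew) cannot succeed: the skew is not caused by weight variation inside a bucket, which is already controlled, but by matroid feasibility correlating with membership in $D$, against which weights are powerless. Concretely, take a partition matroid and arrange the stream so that $A$ already contains an element $h$ of some part $Q$ with $w(h)$ just above $\tau/2$, the adversary's $d$ deleted elements sit in otherwise empty parts with marginal about $\tau$, and an arbitrarily long suffix of the stream consists of elements of part $Q$ with marginal about $\tau$, all landing in the same bucket $B_{\tau}$. Each arrival in the suffix triggers one draw; a drawn $Q$-element fails the swap test (since $\tau \le 2w(h)$) and is discarded, while a drawn deleted element is feasible and added, so after enough arrivals all of $D$ lies in $A$ with high probability. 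Then $\E{w(A)} \approx w(h) + d\tau$ while $\E{f(A')} = w(h) \approx \tau/2$, so the per-bucket inequality $\E{|A'_{\tau}|} \ge (1-\eps)\E{|A_{\tau}|}$, and indeed \eqref{eq:eps3} itself, fail for the algorithm as written (where a drawn element that fails the test is simply discarded). Closing this gap appears to require either modifying the algorithm, e.g., drawing only when the drawn element is guaranteed to enter $A$, or an argument of a genuinely different shape; it is not a residual discrepancy that the bucket width can absorb.
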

\begin{proof}
    Let $A_{\tau}$ be the subset of elements that were added into $A$ coming from $B_{\tau}$ and $A'_{\tau} = A_{\tau} \setminus D$. Moreover, let $a_t^{\tau}$ be the $t^{th}$ element added to $A_{\tau}$ (if any). We have the following:
    \begin{align*}
        \E{|A'_{\tau}|} &= \E{\sum_{t=1}^{|A_{\tau}|} \mathbbm{1}(a_t^{\tau} \not \in D) } = \sum_{t=1}^{+\infty}\E{ \mathbbm{1}(a_t^{\tau} \not \in D) \cdot \mathbbm{1}(|A_{\tau}| \ge t) } \\
        &=\sum_{t=1}^{+\infty} \E{\mathbbm{1}(a_t^{\tau} \not \in D)||A_{\tau}| \ge t} \cdot \P{|A_{\tau}| \ge t} \\
        &\ge (1-\varepsilon)\sum_{t=1}^{+\infty}\P{|A_{\tau}| \ge t}\\
        &=(1-\varepsilon)\E{|A_{\tau}|}.
    \end{align*}
    The crucial observation is now that when the algorithm decides to add an element to $A$ from $B_{\tau}$, then the probability that the element belongs to $D$ is at most $\eps.$
    
    Recall the definition of $A_a$ as the elements in $A$ when $a$ is added, so that $w(a) = f(a|A_a)$. Moreover, $I(a)$ is the indicator variable of the event $a \in A'$ given that $a \in A$. We have
    \[
        w(A) = \sum_{\tau \in T} \sum_{a \in A_{\tau}} w(a)\text{, while } w(A') = \sum_{\tau \in T} \sum_{a \in A_{\tau}} I(a) \cdot w(a) 
    \]
    We know that the weight of an element $a$ coming from $B_t$ is such that $\tau \le w(e) \le (1+\eps)\cdot \tau$. Therefore, we can proceed similarly to what we had in \Cref{lem:robustness}:
    \begin{align*}
        \E{w(A')} &= \E{\sum_{\tau \in T} \sum_{a \in A_{\tau}} I(a) \cdot w(a)} = \sum_{\tau \in T} \E{\sum_{a \in A_{\tau}} I(a) \cdot w(a)} \ge \sum_{\tau \in T} \tau \cdot \E{\sum_{a \in A_{\tau}} I(a)}\\
        &= \sum_{\tau \in T} \tau \cdot \E{|A'_{\tau}|} \ge  (1-\varepsilon) \cdot \sum_{\tau \in T} \tau\cdot \E{|A_{\tau}|} \ge \frac{(1-\varepsilon)}{(1+\varepsilon)} \E{w(A)}
    \end{align*}
    The Lemma follows by \Cref{lem:killed_elements} and that for $\eps \in (0,\frac 13)$ it holds that $(1+3\eps) (1-\eps) \ge (1+\eps)$.
\end{proof}

\begin{lemma}
\label{lem:greedy_matroid}
We have $\E{f((\Sopt \cap C) \setminus B'\mid A \cup K)} \le 2 \cdot \E{f(S)}$.
\end{lemma}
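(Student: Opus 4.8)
The plan is to mirror the charging argument of \Cref{lemma:offline-third-eq}, but to cope with the fact that in the streaming algorithm the candidate solution $A$ evolves non-monotonically through swaps. First I would pin down the structure of the elements of $(\Sopt \cap C) \setminus B'$. Fix a realization of the algorithm's randomness and take any such $x$. It survives the deletions (it lies in $C \subseteq V \setminus D$) and has large marginal, $f(x \mid A \cup K) \ge \eps f(\OPT)/k > 0$; in particular $x \notin A \cup K$, so $x$ was never permanently placed in the solution. Since $x \notin B$, it is not sitting in a surviving bucket at the end, and since it is important it was not discarded for having marginal below $\tau_{\min}$ (exactly the submodularity argument behind \Cref{eq:str-small-els}). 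The only remaining possibility is that $x$ was drawn uniformly at random from some bucket $B_\tau$ and then rejected, i.e., $A_x + x \notin \cM$ and the swap test failed, so $w(x) \le 2\, w(k_x)$, where $k_x$ is the minimum-weight element of the circuit $C(A_x + x)$. Because $A_x \subseteq A \cup K$, submodularity gives $f(x \mid A \cup K) \le f(x \mid A_x) = w(x)$.

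Next I would build the injection that charges these rejected elements to the solution. Order $(\Sopt \cap C)\setminus B' = \{x_1, x_2, \dots\}$ by the time of rejection and set $C_i = C(A_{x_i}+x_i) - x_i$, an independent subset of $A_{x_i} \subseteq A \cup K$ with $x_i \in \cl(C_i)$. Since $\Sopt \cap C$ is independent, for every index set $J$ the independent set $\{x_i : i \in J\}$ lies in $\cl(\bigcup_{i \in J} C_i)$, so $|\bigcup_{i\in J} C_i| \ge \rank(\bigcup_{i \in J} C_i) \ge |J|$; Hall's condition holds and a system of distinct representatives exists (the streaming analogue of the use of \Cref{lem:hall-lemma}). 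This yields an injection $h$ with $h(x_i) \in C_i$, and since $k_{x_i}$ has minimum weight inside the circuit, $w(h(x_i)) \ge w(k_{x_i}) \ge w(x_i)/2$.

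Combining the two steps gives, for every realization, $\sum_i f(x_i \mid A \cup K) \le \sum_i w(x_i) \le 2 \sum_i w(h(x_i))$. Because $h$ is injective and $w$ is additive, the crude bound $\sum_i w(h(x_i)) \le w(A\cup K) \le 2\,w(A)$ only gives the constant $4$. To recover the factor $2$ I would route each image $h(x_i)$ that has since been swapped out (i.e.\ $h(x_i) \in K$) along its swap chain up to the surviving root it led to in $A$: along such a chain weights more than double, so the telescoping behind \Cref{lem:killed_elements} ($w(K) \le w(A)$) lets me absorb the removed images into the final solution and bound $\sum_i w(h(x_i))$ by $w(A)$. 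Then $w(A) \le f(A)$ from \Cref{lem:killed_elements}, the robustness estimate \Cref{lem:streaming_robust}, and $f(A') \le f(S)$ from \Cref{alg:phase-II} convert this, in expectation, into the claimed $\E{\sum_i f(x_i \mid A \cup K)} \le 2\,\E{f(S)}$.

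The main obstacle is exactly this non-monotonicity. In the centralized proof the blocking element stayed in the final $A$ and its marginal was within $(1+\eps)$ of the rejected optimal element, thanks to the monotone threshold sweep, which makes the telescoping immediate and loss-free. Here an image $h(x_i)$ may itself have been swapped out later, the comparison between $w(x_i)$ and $w(h(x_i))$ is only the factor $2$ guaranteed by the swap test rather than $(1+\eps)$, and the naive accounting compounds the swap factor with the $A \cup K \to A$ factor into the constant $4$. The delicate part of the argument is therefore the bookkeeping of the swap chains: one must show that the geometric doubling along each chain exactly cancels the contribution of the removed elements $K$, so that the two factors do not multiply and the final constant stays at $2$.
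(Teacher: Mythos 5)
Your reduction of the statement to a deterministic weight bound is sound, and your first two steps are correct: every element $x$ of $(\Sopt\cap C)\setminus B'$ outside $A\cup K$ must indeed have been sampled at line~\ref{line:sample} and rejected, so $f(x\mid A\cup K)\le f(x\mid A_x)=w(x)\le 2\,w(k_x)$, and your Hall argument is exactly \Cref{lem:hall-lemma} applied with $F=A\cup K$, $G$ the rejected part of $\Sopt\cap C$, and $F_x=C(A_x+x)-x$; it yields an injection $h$ into $A\cup K$ with $w(x_i)\le 2\,w(h(x_i))$. But this only gives $f((\Sopt\cap C)\setminus B'\mid A\cup K)\le 2\,w(A\cup K)\le 4\,w(A)$, i.e.\ constant $4$ and an overall $(6+\beta+O(\eps))$-approximation. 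The paper diverges from you precisely here: it builds no injection at all, but applies Theorem~1 of \citet{Varadaraja11} (with $r=2$) to the virtual stream of elements considered at line~\ref{line:sample}, obtaining as a black box that $w(X)\le 2\,w(A)$ for \emph{every} independent set $X$ of considered elements; that single inequality is what your steps are trying to reproduce from scratch.

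The gap is your third step, the claim $\sum_i w(h(x_i))\le w(A)$. Routing an image $h(x_i)\in K$ along its swap chain to its surviving root is not injective: several images may lie on the same chain, and the root may itself be an image, so the doubling along chains only bounds the charge on each root $a$ by $w(a)\left(1+\tfrac12+\tfrac14+\dots\right)<2\,w(a)$, hence $\sum_i w(h(x_i))\le 2\,w(A)$ --- landing you back at constant $4$; the two factors you hoped to decouple do multiply under this accounting. Worse, the claim is false for a generic Hall injection. Take the graphic matroid on vertices $\{1,2,3,4\}$ with $f$ additive: accept $e_{23}$ ($w=1$), $e_{12}$ ($w=4$), $e_{34}$ ($w=1$); reject $x_1=e_{13}$ ($w=2$, circuit $\{e_{12},e_{23},e_{13}\}$); swap in a parallel copy $e_{12}'$ ($w=8.1>2\cdot 4$), so $K=\{e_{12}\}$; reject $x_2=e_{14}$ ($w=2$, circuit $\{e_{12}',e_{23},e_{34},e_{14}\}$). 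Then $X=\{x_1,x_2\}$ is independent, $h(x_1)=e_{12}$, $h(x_2)=e_{12}'$ is a valid system of distinct representatives, yet $w(h(x_1))+w(h(x_2))=12.1>10.1=w(A)$, and routing $h(x_1)$ to its root produces exactly $h(x_2)$, exhibiting the non-injectivity (the bucket-sampling layer of \Cref{alg:streaming-phase-I} is orthogonal to this issue). Obtaining the constant $2$ requires choosing the injection globally, using the independence of $X$ (here $x_1\mapsto e_{23}$, $x_2\mapsto e_{34}$), with images chargeable to the \emph{final} solution; proving such a choice always exists is an inductive exchange argument over the whole stream, not a static matching composed with chain-routing, and it is precisely the content of the theorem of \citet{Varadaraja11} that the paper invokes. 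As written, your proposal establishes \cref{eq:str-swapping} only with $4$ in place of $2$.
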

\begin{proof}
    We prove a more general statement. Let $\Gamma$ be the set of all elements that, at some point of the run of the algorithm, were considered in an iteration of line \ref{line:sample}. We have the following: for all $X \subseteq \Gamma,$  $X\in \I$, the following inequality holds true
    \[ 
    w(X) \le 2 \cdot w(A).
    \]
    The Lemma follows from Theorem 1 of \citet{Varadaraja11} (using a single matroid and setting $r = 2$). More in specific, we can imagine to restrict the stream to consider only the elements in $\Gamma$, with the order in which they are considered in line \ref{line:sample}.
\end{proof}

\section{Combinatorial Properties of Matroids}

In this section, we focus on showing combinatorial properties of matroids that are used in our analysis. Similar properties have been used in this line of research. We start by stating the main result of this section. To this end, consider a matroid $\cM = (E,\mathcal{I})$, where $E$ is a finite set (called the \emph{ground set}) and $\mathcal{I}$ is a family of subsets of $E$ (called the \emph{independent sets}). 

\begin{lemma}
\label{lem:hall-lemma}
Consider a matroid $\cM = (E,\mathcal{I})$ and two sets $F \subseteq E$, and $G \in \mathcal{I}$. Suppose that for all $x \in G\setminus F$ there exist a set $F_x \subseteq F, F_x \in \mathcal{I}$ such that $F_x  + x \not\in \mathcal{I}$. Then there exists a mapping $h: G \setminus F \rightarrow F$ such that 
\begin{itemize}
    \item for all $x \in G\setminus F$, $h(x) \in F_x$, and
    \item for all $x,y \in G\setminus F$ with $x \neq y$, $h(x) \neq h(y)$.
\end{itemize}
\end{lemma}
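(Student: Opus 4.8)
The plan is to recognize this statement as the existence of a \emph{system of distinct representatives} (SDR) for the finite set family $\{F_x\}_{x \in G \setminus F}$, and to obtain it via Hall's marriage theorem. Indeed, a map $h$ with the two required properties is precisely an SDR: an injective assignment of a representative $h(x) \in F_x$ to each index $x \in G\setminus F$. Since $E$ is finite, Hall's theorem applies and reduces the entire problem to checking the defect-free condition $\bigl|\bigcup_{x \in Y} F_x\bigr| \ge |Y|$ for every $Y \subseteq G \setminus F$.

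The first ingredient I would establish is a reformulation of the hypothesis in terms of the matroid closure. Since $F_x \in \mathcal{I}$ but $F_x + x \notin \mathcal{I}$, adjoining $x$ to the independent set $F_x$ creates a dependency, which by definition of the closure means $x \in \cl(F_x)$. To verify Hall's condition, I would then fix an arbitrary $Y \subseteq G \setminus F$ and set $U = \bigcup_{x \in Y} F_x$. For every $x \in Y$ we have $F_x \subseteq U$, so monotonicity of the closure gives $x \in \cl(F_x) \subseteq \cl(U)$; hence $Y \subseteq \cl(U)$. Because enlarging a set by elements of its own closure leaves the rank unchanged, $\rank(U \cup Y) = \rank(U)$. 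On the other hand, $Y \subseteq G \in \mathcal{I}$ is itself independent, so $\rank(Y) = |Y|$, and combining these with the trivial bound $\rank(U) \le |U|$ yields
\[
|U| \ge \rank(U) = \rank(U \cup Y) \ge \rank(Y) = |Y|.
\]
This is exactly the Hall condition, so an SDR $h$ exists, giving the injective mapping with $h(x) \in F_x$ for all $x$.

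The crux of the argument — and the step I expect to require the most care — is the translation from the purely \emph{local} hypothesis ``$F_x + x$ is dependent'' into the \emph{global} statement ``$Y$ lies in $\cl(U)$'', which is what powers the rank inequality. Everything downstream (the invocation of Hall's theorem and the inequality $\rank(U) \le |U|$) is standard matroid bookkeeping. If one prefers to avoid explicitly invoking the rank function and closure, the same Hall condition can be re-derived directly from the augmentation axiom by a short exchange argument, but phrasing it through $\cl$ and $\rank$ keeps the computation transparent.
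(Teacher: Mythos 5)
Your proof is correct, and it shares the paper's skeleton: recast the lemma as the existence of a transversal (system of distinct representatives) for the family $\{F_x\}_{x \in G\setminus F}$, verify Hall's marriage condition, and obtain the injective map from Hall's theorem; both arguments also hinge on the same translation of ``$F_x + x \notin \mathcal{I}$'' into ``$x \in \cl(F_x)$''. The difference lies in how the marriage condition is certified. The paper, after showing $\cl(F_V \cup V) = \cl(F_V)$ for $V \subseteq G \setminus F$ and $F_V = \bigcup_{x \in V} F_x$ (your $Y$ and $U$), lower-bounds the rank of $\cl(F_V)$ by taking the restriction of $\cM$ to $\cl(F_V)$ and then contracting by $(F \cap G) \cap \cl(F_V)$, arguing that $V$ stays independent in this minor. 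You avoid minors altogether: since $Y \subseteq G \in \mathcal{I}$ is independent and lies in $\cl(U)$, you get $|U| \ge \rank(U) = \rank(U \cup Y) \ge \rank(Y) = |Y|$ using only monotonicity of the rank and the fact that adjoining elements of $\cl(U)$ does not change the rank. This is a genuine streamlining rather than a different method: the restriction/contraction detour in the paper exists only to certify $\rank(\cl(F_V)) \ge |V|$, which your one-line chain delivers directly, and it sidesteps the imprecisions in the paper's corresponding display (the undefined set $F \cap B$, presumably $F \cap G$, and an equality with the contraction's rank that in general holds only as an inequality via submodularity). Both routes establish exactly the same Hall condition, so nothing is lost by your simplification, and the resulting proof is easier to check.
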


Intuitively, $h$ as a {\emph{semi-matching}} that matches all elements in $G \setminus F$ to an element in $\cup_{x \in G\setminus F} F_x$, while some of the elements in $\cup_{x \in G\setminus F} F_x \subseteq F$ can remain unmatched.

We use Hall's Marriage Theorem to prove the lemma. The combinatorial version of this theorem concerns set systems and the existence of a transversal (a.k.a.~system of distinct representatives). 
Formally, let $\cS$ be a family of finite subsets of a base set $X$ ($\cS$ may contain the same set multiple times). A \emph{transversal} is an injective function $f: \cS \rightarrow X$ such that $f(S) \in S$ for every set $S \in \cS$. In other words, $f$ selects one representative from each set in $S$ in such a way that no two of these representatives are equal.   
\begin{definition}
A family of sets $\cS$ satisfies the \emph{marriage condition} if for each subfamily $\cW \subseteq \cS$,
\[
|\cW| \leq \left|\bigcup_{F \in \cW} F\right| \,.
\]
\end{definition}

\begin{theorem}[Hall (1935)] \label{hall-thm}
A family of sets $\cS$ has a transversal if and only if $\cS$ satisfies the marriage condition.
\end{theorem}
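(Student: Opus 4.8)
The plan is to prove the two directions separately, with necessity being immediate and sufficiency carried out by induction on the number $n = |\cS|$ of sets in the (finite, possibly repeated) family. For necessity, suppose a transversal $f \colon \cS \to X$ exists and fix any subfamily $\cW \subseteq \cS$. The values $\{f(F) : F \in \cW\}$ are pairwise distinct by injectivity and each lies in its corresponding $F$, hence in $\bigcup_{F \in \cW} F$; this exhibits $|\cW|$ distinct elements inside $\bigcup_{F \in \cW} F$, so $|\cW| \le |\bigcup_{F \in \cW} F|$, which is exactly the marriage condition.

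For sufficiency I would induct on $n$, treating $\cS$ as an indexed family so that repeated sets cause no trouble. The base case $n = 1$ follows since the marriage condition applied to $\cW = \cS$ forces $|S_1| \ge 1$, so any element of $S_1$ serves as its representative. For the inductive step, assume the claim for every family of fewer than $n$ sets and let $\cS$ satisfy the marriage condition. I would split on whether the condition is ever \emph{tight} on a proper nonempty subfamily. In the \emph{slack} case, where every nonempty proper $\cW \subsetneq \cS$ satisfies the strict inequality $|\bigcup_{F \in \cW} F| \ge |\cW| + 1$, I pick an arbitrary set of the family, assign to it some element $x$ it contains, delete $x$ from the remaining $n-1$ sets, and apply the inductive hypothesis to that smaller family; the strict slack guarantees that removing a single element cannot violate the marriage condition, and the two pieces paste together into a transversal since $x$ no longer appears among the smaller sets. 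In the \emph{tight} case, where some nonempty proper $\cW_0$ satisfies $|\bigcup_{F \in \cW_0} F| = |\cW_0| =: m$, I would first invoke induction on $\cW_0$ (only $m < n$ sets) to get a transversal of $\cW_0$ using all $m$ elements of $U := \bigcup_{F \in \cW_0} F$, then form the complementary family $\{F \setminus U : F \in \cS \setminus \cW_0\}$, verify it still satisfies the marriage condition, apply induction to it, and combine the two transversals. They are automatically disjoint because one uses exactly $U$ and the other avoids $U$, so their union is a transversal of all of $\cS$.

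The main obstacle is verifying that the marriage condition survives into the reduced families, and in particular the tight case. There, for any subfamily $\cW''$ of $\cS \setminus \cW_0$ I would apply the marriage condition of $\cS$ to the combined family $\cW_0 \cup \cW''$, then peel off the contribution of $U$ via $|U \cup \bigcup_{F \in \cW''} F| = |U| + |\bigcup_{F \in \cW''} (F \setminus U)|$, cancelling the $m = |\cW_0|$ terms on both sides to conclude $|\cW''| \le |\bigcup_{F \in \cW''} (F \setminus U)|$. This bookkeeping, together with the disjointness of the two partial transversals, is precisely what legitimizes the splitting and closes the induction; the slack case is the easier half, requiring only that deleting one element drops each union size by at most one.
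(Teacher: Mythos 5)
Your proof is correct: it is the classical Halmos--Vaughan induction argument for Hall's theorem, with necessity immediate from injectivity and sufficiency handled by splitting into the slack case (delete the chosen representative from the remaining sets) and the tight case (peel off a critical subfamily $\cW_0$ whose union $U$ is used exactly, then verify the marriage condition for $\{F \setminus U : F \in \cS \setminus \cW_0\}$); both reductions are justified exactly as needed, and treating $\cS$ as an indexed family correctly disposes of repeated sets. Note, however, that the paper gives no proof of \cref{hall-thm} at all --- it cites the result as Hall (1935) and uses it as a black box in the proof of \Cref{lem:hall-lemma} --- so your argument supplies a standard, self-contained proof of a statement the paper deliberately leaves as classical.
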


Let us recall two well-known properties of matriods.
\begin{definition}[restriction]
Given a matroid $\cM = (E,\mathcal{I})$ and a set $S \subseteq E$ the \emph{restriction} of $\cM$ to $S$, denoted by $\cM \mid S$, is matroid $\cM' = (S,\mathcal{I'})$ where $\mathcal{I}' = \{T \subseteq S \mid T \in \mathcal{I}\}$.
\end{definition}

\begin{definition}[contraction]
Given a matroid $\cM = (E,\mathcal{I})$ and a set $S \subseteq E$ the \emph{contraction} of $\cM$ by $S$, written $\cM / S$, is the matroid $M'$ on $E\setminus S$ with rank function $r_{\cM'}(T) = r_\cM(T \cup S) - r_\cM(S)$.

Specifically, for $S \in \mathcal{I}$, the restriction of $\cM$ by $S$ is the matroid $\cM' = (E \setminus S, \mathcal{I}')$ where $\mathcal{I}' = \{T \subseteq E \setminus S \mid T \cup S \in \mathcal{I}\}$.
\end{definition}
We are ready to present the proof of the main lemma of this section.
\begin{proof}[Proof of \cref{lem:hall-lemma}]
    Let $V$ be any subset of $G\setminus F$. Define $F_V = \bigcup_{x \in V} F_x$. We want to show that 
    \[
    |F_V| = |\bigcup_{x \in V} F_x| \geq |V|. 
    \]

    To do that, we first show that $\cl(V \cup F_V) = \cl(F_V)$. Recall that $\cl(.)$ denotes the \emph{closure} (or \emph{span}) of a set.
    
    We have that $V \subseteq \cl(F_V)$, in fact, for each element $x \in V$ there exists a subset $F_x \subseteq F_V$ such that $F_x + x \not\in \I$ and therefore $x \in \cl(F_x)$, which implies, by monotonicity of the closure with respect to the inclusion, that $x \in \cl(F_V)$.
    We also know that $F_V \subseteq \cl(F_V)$, hence $F_V \cup V \subseteq \cl(F_V)$. Now, if we apply the closure to both sets, we get
    \[
        \cl(F_V \cup V) \subseteq \cl(\cl(F_V)) = \cl(F_V) \subseteq \cl(F_V\cup V),
    \]
    where the equation follows by the well-known properties of closure. This shows that $\cl(F_V \cup V) = \cl(F_V)$ as claimed.
    
    Now let us look at the restriction of the matroid $\cM$ to $\cl(F_V \cup V) = \cl(F_V)$. Afterwards, contract this matroid by $(F \cap G) \cap \cl(F_V)$. Call this matroid $\cM'$, and denote its rank by $r'$. We claim that $V$ is independent in this new matroid $\cM'$. This is due to, $V \subseteq G \setminus F$ and $V \cup (F \cap G) \subseteq G$ and $G \in \I$.
    
    We thus have 
    \[
    r_\cM(\cl(F_V)) \geq r_\cM(\cl(F_V) \setminus (F \cap B) ) = r' \geq |V|.
    \]
    
    Finally,
    \[
    |F_V| \geq r_\cM(F_V) = r_\cM(\cl(F_V)).
    \]
    Putting these two chains of inequalities together, we obtain $|F_V| \geq |V|$ as claimed. The proof now follows by applying \cref{hall-thm}.
\end{proof}

\begin{minipage}{0.46\textwidth}
\begin{algorithm}[H]
\caption{Lazy Greedy}
\label{alg:lazy_greedy}
\begin{algorithmic}[1]
    \STATE \textbf{Input:} Precision parameter $\eps_0>0$
    \STATE $A \gets \emptyset$, $\Delta \gets \max_{e \in V} f(e)$, max-iter $\gets \frac{1}{\eps_0}\log{\frac{k}{\eps_0}}$
    \STATE Let $Q$ be a priority queue
    \FOR{$e \in V$}
        \STATE Add $e$ to $Q$ with priority $p(e) = f(e)$
        \STATE cont$(e) \gets 0$
    \ENDFOR
    \WHILE{$Q$ is not empty}
        \STATE Pop element $e$ with largest priority $p(e)$ from $Q$
        \IF{$A + e\notin \cM$ or cont$(e) \ge $ max-iter}
            \STATE Discard $e$
        \ELSIF{$p(e) \le (1+\eps_0) \cdot f(e|A)$}
            \STATE Add $e$ to $A$
        \ELSE
            \STATE Add $e$ to $Q$ with priority $p(e) = f(e|A)$
            \STATE Increase cont$(e)$ by $1$
        \ENDIF
    \ENDWHILE
    \RETURN $A$
\end{algorithmic}
\end{algorithm}
\end{minipage}
\hfill
\begin{minipage}{0.46\textwidth}
\begin{algorithm}[H]
\caption{Swapping algorithm} 
\label{alg:swapping}
\begin{algorithmic}[1]
    \STATE $A \gets \emptyset$
    \FOR{every arriving element $e$}
        \STATE $w(e) \gets f(e|A)$
        \IF{$A + e \in \cM$}
            \STATE $A \gets A + e$
        \ELSE
            \STATE $k \gets \argmin\{ w(k) \mid k \in C(A + e)\}$
            \IF{$2 \cdot w(k) < w(e)$}
                \STATE $A \gets A - k + e$
            \ENDIF
        \ENDIF
    \ENDFOR
    \RETURN $A$
\end{algorithmic}
\end{algorithm}
\end{minipage}

\section{More on the Experimental framework}
\label{sec:app_exp}

    In the experiments, we use as subroutines two famous algorithms: the lazy implementation of the greedy algorithm for monotone submodular maximization subject to a matroid constraint \citep{fisher78-II,Minoux78} and the streaming algorithm from \citet{ChakrabartiK15}. For the sake of completeness the pseudocodes are reported in \Cref{alg:lazy_greedy} and \Cref{alg:swapping}, while their main properties are summarized in \Cref{thm:folklore}. We use the lazy implementation of greedy because it is way quicker than simple greedy and loses only a small additive constant in the approximation. For our experiments we set this precision parameter of lazy greedy $\eps_0$ to $0.0001$. 
    
    \begin{theorem}[Folklore]
    \label{thm:folklore}
        The lazy implementation of the Greedy algorithm (\Cref{alg:lazy_greedy}) is a deterministic $(2+O(\eps_0))$-approximation for submodular maximization subject to a matroid constraint. Moreover it terminates after $O(\frac{n}{\eps_0} \log k)$ calls to the value oracle of the submodular function.
        The swapping algorithm (\Cref{alg:swapping}) is a a deterministic $4$-approximation for streaming submodular maximization subject to a matroid constraint. Its memory is exactly $k$.
    \end{theorem}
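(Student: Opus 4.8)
The plan is to treat the two algorithms separately, since each reduces to a well-known argument. For the lazy greedy (\Cref{alg:lazy_greedy}) I would first recall the classical $2$-approximation of the plain greedy of \citet{fisher78-II}, and then quantify the loss incurred by the lazy, priority-queue implementation. Write the returned set as $A=(a_1,\dots,a_m)$ in the order of insertion and let $A_i=\{a_1,\dots,a_i\}$, and fix an optimal independent set $O=\OPT$. Using the matroid exchange property (in the style of \Cref{lem:hall-lemma}, with $F=A$ and $G=O$) one pairs each $o\in O\setminus A$ with a distinct $a_i\in A$ chosen at a step where $o$ was still a feasible extension, i.e. $A_{i-1}+o\in\cM$. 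The backbone of the bound is then $f(O)\le f(O\cup A)\le f(A)+\sum_{o\in O\setminus A} f(o\mid A)$, combined with the step-by-step comparison $f(o\mid A)\le f(o\mid A_{i-1})\le(1+\eps_0)\,f(a_i\mid A_{i-1})$, where the first inequality is submodularity and the second is the lazy selection rule.

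The quantitative point to establish is exactly this selection rule: whenever the algorithm pops $e$ with top priority $p(e)$ and adds it because $p(e)\le(1+\eps_0)f(e\mid A)$, submodularity guarantees that every stored priority upper bounds the current marginal, so $p(e)\ge\max_{e'\in Q}f(e'\mid A)$ and hence $f(e\mid A)\ge\frac{1}{1+\eps_0}\max_{e'\in Q}f(e'\mid A)$, making $e$ a $(1+\eps_0)$-approximately greedy choice among the elements still queued. The one subtlety, and the main obstacle of this half, is that the matched element $o$ must still reside in $Q$ when $a_i$ is selected; elements evicted by the \texttt{max-iter} cutoff break the matching and must be accounted for separately. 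I would bound them as follows: each re-insertion shrinks an element's priority by a factor $(1+\eps_0)$, so after $\text{max-iter}=\frac{1}{\eps_0}\log\frac{k}{\eps_0}$ re-evaluations its marginal has dropped below $\frac{\eps_0}{k}\Delta\le\frac{\eps_0}{k}f(O)$; summing over the at most $k$ discarded optimal elements contributes only $\eps_0 f(O)$. Telescoping $\sum_i f(a_i\mid A_{i-1})=f(A)$ and collecting terms yields $f(O)\le f(A)+(1+\eps_0)f(A)+\eps_0 f(O)$, i.e. the claimed $(2+O(\eps_0))$ ratio. The oracle bound is then immediate: every element is pushed and popped at most $\text{max-iter}+1$ times, each pop costing a single marginal evaluation, for a total of $O\!\left(\frac{n}{\eps_0}\log k\right)$ calls.

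For the swapping algorithm (\Cref{alg:swapping}) the memory claim is immediate, since the only stored set $A$ is kept independent throughout, so $|A|\le k$ and no auxiliary buffer is used. The $4$-approximation is the result of \citet{ChakrabartiK15}, and I would reassemble it from the weight-function machinery already developed for the streaming analysis: $w(A)\le f(A)$ and $w(K)\le w(A)$ (\Cref{lem:killed_elements}), $f(A\cup K)\le w(A\cup K)$ (\Cref{lem:w(AcupK)_vs_f(AcupK)}), and the charging inequality $w(X)\le 2\,w(A)$ for every independent $X$ among the considered elements (\Cref{lem:greedy_matroid}, via \citet{Varadaraja11} with $r=2$). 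Decomposing $f(O)\le f(O\cup A\cup K)\le f(A\cup K)+\sum_{o\in O}f(o\mid A\cup K)$ and bounding the first term by $w(A\cup K)\le 2w(A)\le 2f(A)$ and the optimal marginals by the charging lemma gives $f(O)\le 4f(A)$. The main obstacle here is precisely \Cref{lem:greedy_matroid}: controlling the total weight of any feasible subset of the elements ever examined by twice the weight of the final solution, which is exactly the accounting that absorbs the whole chain of swaps triggered, directly or indirectly, by any single insertion.
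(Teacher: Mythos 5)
The paper never actually proves \Cref{thm:folklore}: it is labeled ``Folklore'' and discharged entirely by citation (\citet{fisher78-II,Minoux78} for lazy greedy, \citet{ChakrabartiK15} and \citet{Varadaraja11} for swapping), so your proposal is not an alternative to an in-paper argument but a reconstruction of the missing classical ones --- and, as a reconstruction, it is correct in both halves: the matching-plus-priority-queue analysis for lazy greedy and the weight-function assembly for swapping are exactly the standard proofs. Two glosses would make it fully rigorous. First, in the lazy-greedy half, when you invoke \Cref{lem:hall-lemma} (with $F=A$, $G=\OPT$) you should take $F_o$ to be the \emph{first} prefix $A_{j(o)}$ of the greedy solution with $A_{j(o)}+o\notin\cM$, not the solution at the moment infeasibility is \emph{detected}: in \Cref{alg:lazy_greedy} feasibility is only tested when an element is popped, so detection can occur much later, and only the former choice guarantees that the matched $a_i$ satisfies $A_{i-1}+o\in\cM$. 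With that choice your case analysis goes through: downward closedness rules out an earlier infeasibility discard of a matched $o$, leaving only the max-iter evictions that you charge separately (your $\eps_0 f(\OPT)$ bound for those needs $(1+\eps_0)^{\text{max-iter}}\ge k/\eps_0$, which holds for the base-$2$ logarithm in \Cref{alg:lazy_greedy} since $\ln(1+\eps_0)\ge\eps_0\ln 2$ on $(0,1]$; in any case this only moves constants inside the $O(\eps_0)$). Second, in the swapping half, the lemmas you recycle (\Cref{lem:killed_elements,lem:w(AcupK)_vs_f(AcupK),lem:greedy_matroid}) are stated and proved for the robust algorithm \Cref{alg:streaming-phase-I}, not for \Cref{alg:swapping}; their proofs transfer verbatim since \Cref{alg:swapping} is the non-robust skeleton they were adapted from, but this should be said. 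You also need the intermediate submodularity step $f(o\mid A\cup K)\le f(o\mid A_o)=w(o)$, valid because the solution $A_o$ at the arrival of $o$ is contained in $A\cup K$, before the charging inequality $w(\OPT)\le 2\,w(A)\le 2f(A)$ can be used to bound $\sum_{o\in\OPT}f(o\mid A\cup K)$; your sketch leaves this implicit. With these two points made explicit, your argument establishes the theorem.
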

    
    \paragraph{Similarities between \robswap and \omnswap.} \robswap and \omnswap are both guaranteed to produce a $4$ approximation to the best independent set in $V \setminus D$. Besides, their output share the same basic structure: after the deletions, \omnswap outputs the instance of \swapping with largest index between those that did not suffered deletions. After all, this is just an instance of \swapping that is guaranteed to not have suffered deletions and that has parsed all elements in $V\setminus D$; this is similar to directly running \omnswap and ignoring the elements in $V\setminus D$, the only difference being the order in which the elements of the stream were considered and the fact that \omnswap might have contained some element (later removed) from $D$.

    \paragraph{Interactive Personalized Movie Recommendation.} We consider only movies with at least one rating. The movies in the dataset may belong to more than one genre. Since considering directly as constraint the resulting rule \textit{At most one movie from each genre} would not be a matroid, we associated to each movie a genre distribution vector, then we clustered the movies using k-means++ on those vectors to recover $10$ clusters/macro-genres. In the streaming set up we consider a random permutation of the movies, fixed across the experiment. The feature vector $v_u$ of the user $u$ to whom the reccomendation is personalized is drawn uniformly at random from $[0,1]^{30}$.

    \paragraph{Kernel Log-determinant.} The kernel matrix is defined as $  K_{i,j} = e^{-\left(\frac{d_{i,j}}{h}\right)^2}$,    where $d_{i,j}$ denotes the distance between the coordinates of the $i^{th}$ and $j^{th}$ locations while $h$ is a normalization parameter. We set $h$ to be the empirical standard deviation of the pairwise distances in the RunInRome dataset, while we set $h^2=5000$ in the UberDataset. We mention that the identity matrix in the objective function is needed to be sure to take the log of the determinant of a positive definite matrix. The parameter $\alpha$ then tunes the importance of this regularizing perturbation. We set it to be $10$ for both the datasets.
    
    \paragraph{RunInRome.} In the streaming experiments with this dataset we kept the original order of the positions. The partition of the data points in geographical areas is achieved by dividing the center of Rome in an equally spaced $5 \times 5$ grid (in terms of latitude and longitude). The non-empty cells of the grid are $10.$ 
    
    \paragraph{Uber Dataset.} The positions considered in the experiments have been obtained by uniform sampling with parameter $1/50$ from the pickups locations of April 2014. In the streaming setting we considered the order of the dataset as sampled. We used the {\em base} feature of the dataset to partition the positions into $5$ subsets.

\paragraph{More experiments.} Further experiments for different values of the parameter $\eps$ are reported in \Cref{fig:movieLens,fig:facebook,fig:runInRome_logdet,fig:runInRome_kmedoid,fig:uber_logdet,fig:uber_kmedoid}. Average and standard deviation over three runs are reported. We observe that, as suggested by the theoretical results, the value of the objective function decreases as we increase the value of $\eps$. For instance, in \cref{fig:facebook} the value of the objective function decreases from $\approx 1300$ to $\approx 1200$ when we increase $\eps$ from $0.3$ to $0.99$. We observe that the performances achieved in the experiments are way better than the theoretical worst-case guarantees. One explanation is that in practice the probability that an element added to $A$ gets deleted is way larger than $1-\eps$: in our analysis we consider the pessimistic case where the adversary manages somehow to {\em always} delete $d$ out of the $d/\epsilon$ elements contained in the bucket we are sampling from. This is however in general quite unlikely: the composition of the buckets may change over time and it is possible there is no large intersection between all the buckets that were considered across all the ``sampling'' steps of the algorithm.

\begin{figure*}[ht!]
	\captionsetup[subfigure]{aboveskip=1pt}
	\centering
	\begin{subfigure}{.33\textwidth}
		\centering
 		\scalebox{0.33}{\includegraphics{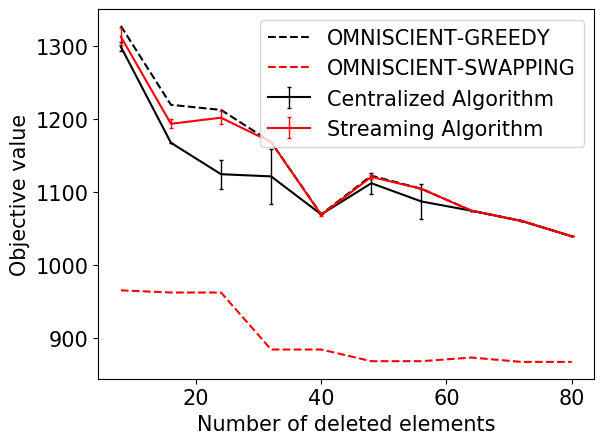}}
		\caption{\footnotesize Results for $\eps = 0.3$}
	\end{subfigure}\hspace{0.1pt}%
	\begin{subfigure}{.33\textwidth}
		\centering
		\scalebox{0.33}{\includegraphics{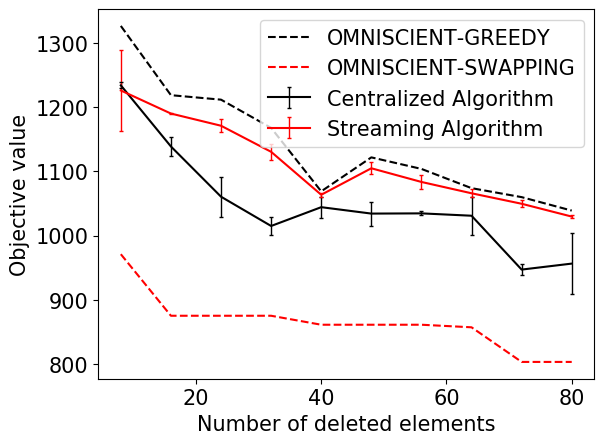}}
		\caption{\footnotesize Results for $\eps = 0.5$}
	\end{subfigure}\hspace{0.1pt}%
	\begin{subfigure}{.33\textwidth}
		\centering 		\scalebox{0.33}{\includegraphics{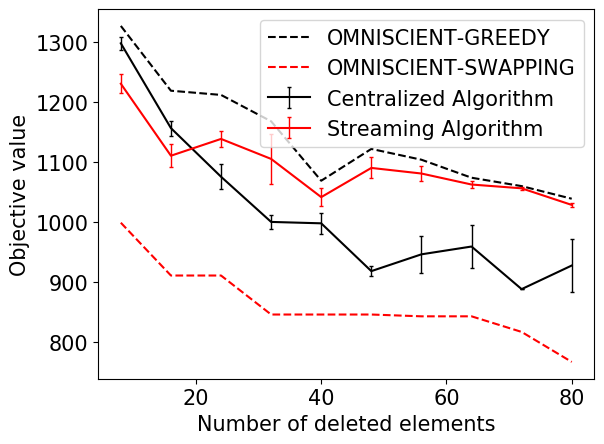}}
		\caption{\footnotesize Results for $\eps = 0.7$}
	\end{subfigure}
	\vspace{5pt}\\
	\begin{subfigure}{.33\textwidth}
		\centering
 		\scalebox{0.33}{\includegraphics{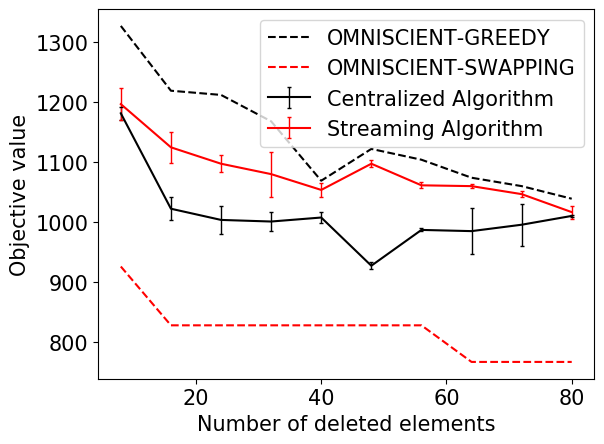}} 
        \caption{\footnotesize Results for $\eps = 0.9$}
	\end{subfigure}\hspace{0.1pt}%
	\begin{subfigure}{.33\textwidth}
		\centering
		\scalebox{0.33}{\includegraphics{facebook/facebook_objective_0.99.png}} 
		\caption{\footnotesize Results for $\eps = 0.99$}
	\end{subfigure}
	\caption{\small Results of the Influence Maximization experiments on the Facebook dataset for different values of $\eps$. Note how as $\eps$ decreases the performances of our algorithms improve, while being always comparable with the benchmarks. The lines corresponding to $\omnswap$ changes in different plots since the random permutation considered changes; this however does not change its qualitative performance.} 
	\label{fig:facebook}
\end{figure*}

\begin{figure*}[ht!]
	\captionsetup[subfigure]{aboveskip=1pt}
	\centering
	\begin{subfigure}{.33\textwidth}
		\centering
 		\scalebox{0.33}{\includegraphics{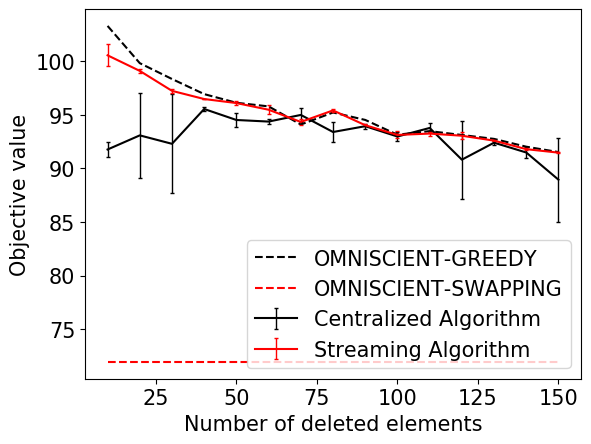}}
		\caption{\footnotesize Results for $\eps = 0.3$}
	\end{subfigure}\hspace{0.1pt}%
	\begin{subfigure}{.33\textwidth}
		\centering
		\scalebox{0.33}{\includegraphics{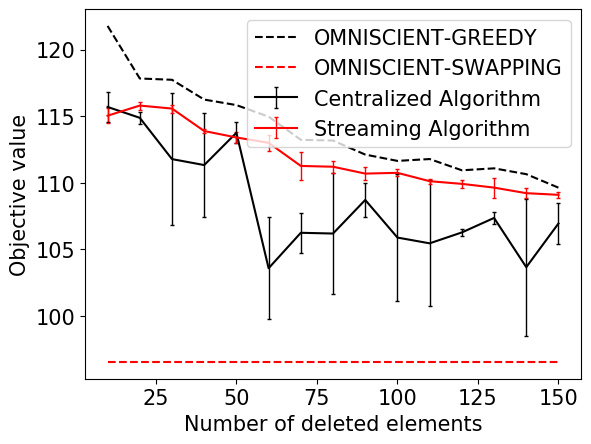}}
		\caption{\footnotesize Results for $\eps = 0.5$}
	\end{subfigure}\hspace{0.1pt}%
	\begin{subfigure}{.33\textwidth}
		\centering 		\scalebox{0.33}{\includegraphics{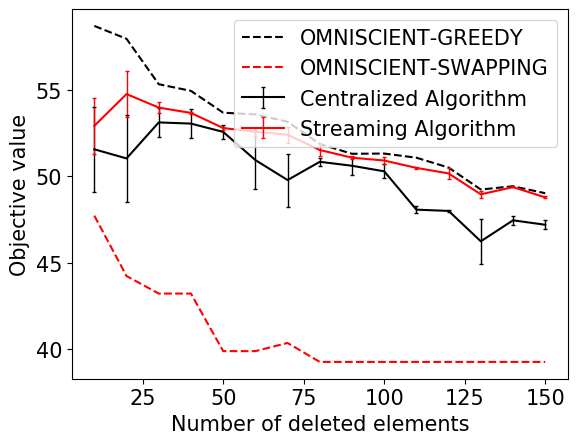}}
		\caption{\footnotesize Results for $\eps = 0.7$}
	\end{subfigure}
	\vspace{5pt}\\
	\begin{subfigure}{.33\textwidth}
		\centering
 		\scalebox{0.33}{\includegraphics{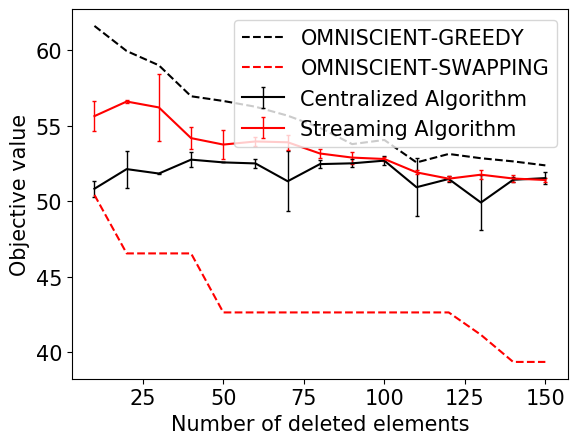}} 
        \caption{\footnotesize Results for $\eps = 0.9$}
	\end{subfigure}\hspace{0.1pt}%
	\begin{subfigure}{.33\textwidth}
		\centering
		\scalebox{0.33}{\includegraphics{movieLens/movieLens_objective_0.99.png}} 
		\caption{\footnotesize Results for $\eps = 0.99$}
	\end{subfigure}
	\caption{\small Results of the Interactive Personalized Movie Recommendation on the MovieLens dataset for different values of $\eps$. Note how as $\eps$ decreases the performances of our algorithms improve, while being always comparable with the benchmarks. The performances of the benchmarks change for different values of $\eps$ because in each experiment a new user feature vector is drawn uniformly at random as well as a new permutation is considered in the streaming setting. The results are however qualitatively comparable.} 
	\label{fig:movieLens}
\end{figure*}

\begin{figure*}[ht!]
	\captionsetup[subfigure]{aboveskip=1pt}
	\centering
	\begin{subfigure}{.33\textwidth}
		\centering
 		\scalebox{0.33}{\includegraphics{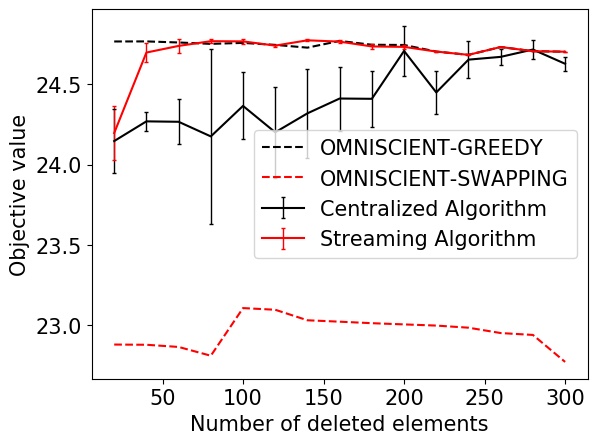}}
		\caption{\footnotesize Results for $\eps = 0.3$}
	\end{subfigure}\hspace{0.1pt}%
	\begin{subfigure}{.33\textwidth}
		\centering
		\scalebox{0.33}{\includegraphics{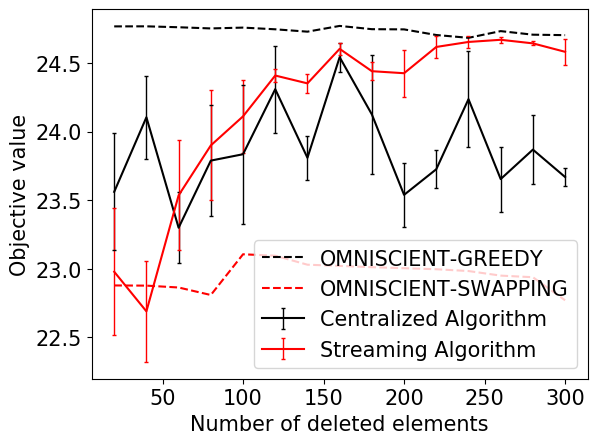}}
		\caption{\footnotesize Results for $\eps = 0.5$}
	\end{subfigure}\hspace{0.1pt}%
	\begin{subfigure}{.33\textwidth}
		\centering 		\scalebox{0.33}{\includegraphics{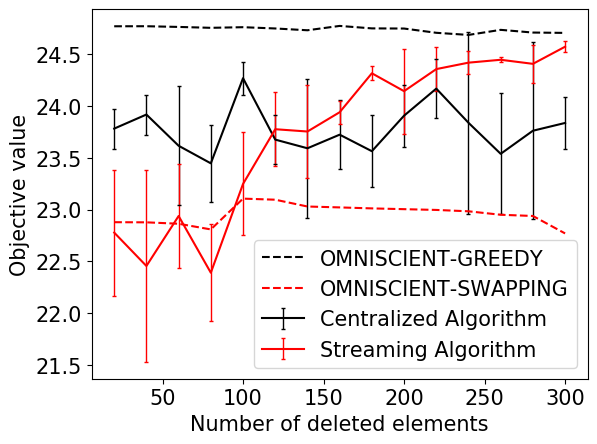}}
		\caption{\footnotesize Results for $\eps = 0.7$}
	\end{subfigure}
	\vspace{5pt}\\
	\begin{subfigure}{.33\textwidth}
		\centering
 		\scalebox{0.33}{\includegraphics{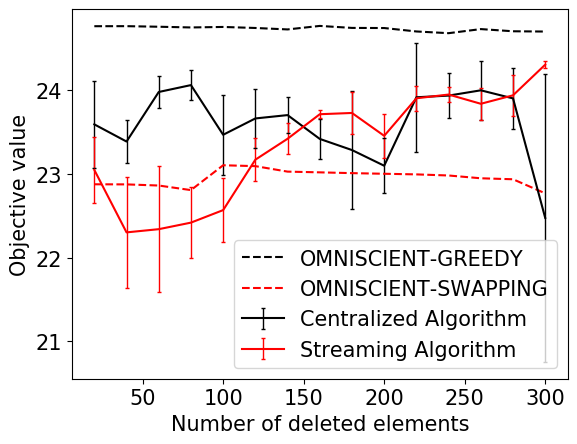}} 
        \caption{\footnotesize Results for $\eps = 0.9$}
	\end{subfigure}\hspace{0.1pt}%
	\begin{subfigure}{.33\textwidth}
		\centering
		\scalebox{0.33}{\includegraphics{runInRome logdet/run_logdet_objective_0.99.png}} 
		\caption{\footnotesize Results for $\eps = 0.99$}
	\end{subfigure}
	\caption{\small Results of the kernel logdet experiment on the RunInRome dataset for different values of $\eps$. Note how as $\eps$ decreases the performances of our algorithms improve, while being always comparable with the benchmarks.} 
	\label{fig:runInRome_logdet}
\end{figure*}

\begin{figure*}[ht!]
	\captionsetup[subfigure]{aboveskip=1pt}
	\centering
	\begin{subfigure}{.33\textwidth}
		\centering
 		\scalebox{0.33}{\includegraphics{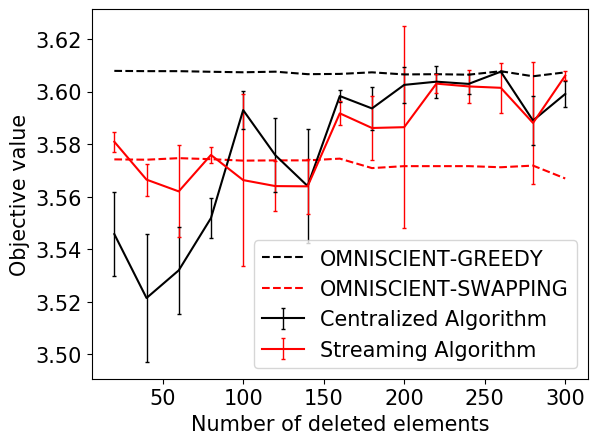}}
		\caption{\footnotesize Results for $\eps = 0.3$}
	\end{subfigure}\hspace{0.1pt}%
	\begin{subfigure}{.33\textwidth}
		\centering
		\scalebox{0.33}{\includegraphics{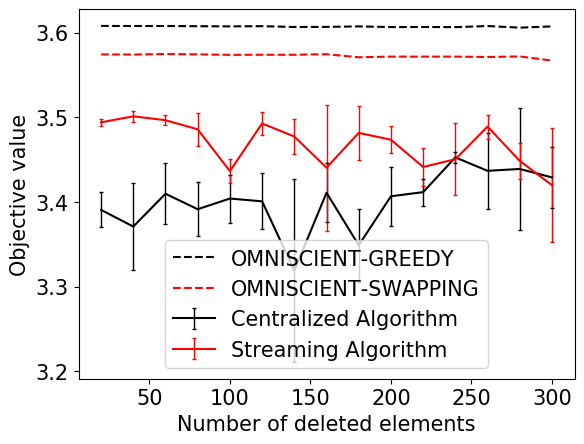}}
		\caption{\footnotesize Results for $\eps = 0.5$}
	\end{subfigure}\hspace{0.1pt}%
	\begin{subfigure}{.33\textwidth}
		\centering 		\scalebox{0.33}{\includegraphics{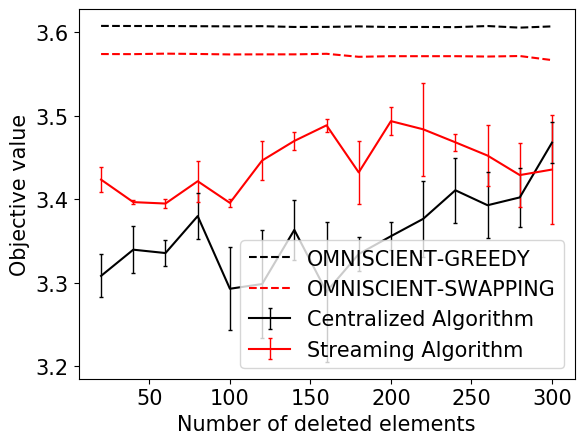}}
		\caption{\footnotesize Results for $\eps = 0.7$}
	\end{subfigure}
	\vspace{5pt}\\
	\begin{subfigure}{.33\textwidth}
		\centering
 		\scalebox{0.33}{\includegraphics{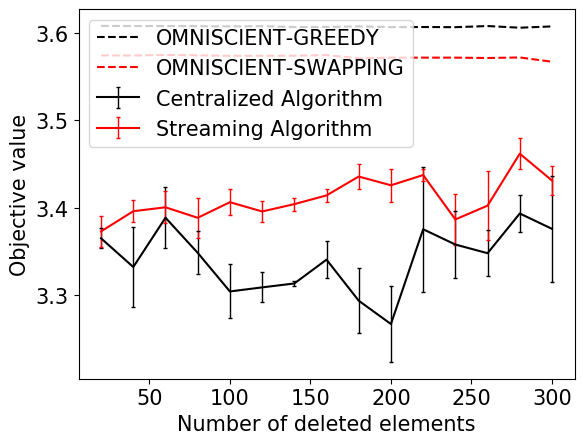}} 
        \caption{\footnotesize Results for $\eps = 0.9$}
	\end{subfigure}\hspace{0.1pt}%
	\begin{subfigure}{.33\textwidth}
		\centering
		\scalebox{0.33}{\includegraphics{runInRome kmedoid/run_kmedoid_objective_0.99.png}} 
		\caption{\footnotesize Results for $\eps = 0.99$}
	\end{subfigure}
	\caption{\small Results of the $k$-medoid experiment on the RunInRome dataset for different values of $\eps$. Note how as $\eps$ decreases the performances of our algorithms improve, starting from a worst case of $\approx 10 \% $ for $\eps = 0.99$ and a small number of deletions to a nearly identical performance for $\eps = 0.3.$} 
	\label{fig:runInRome_kmedoid}
\end{figure*}

\begin{figure*}[ht!]
	\captionsetup[subfigure]{aboveskip=1pt}
	\centering
	\begin{subfigure}{.33\textwidth}
		\centering
 		\scalebox{0.33}{\includegraphics{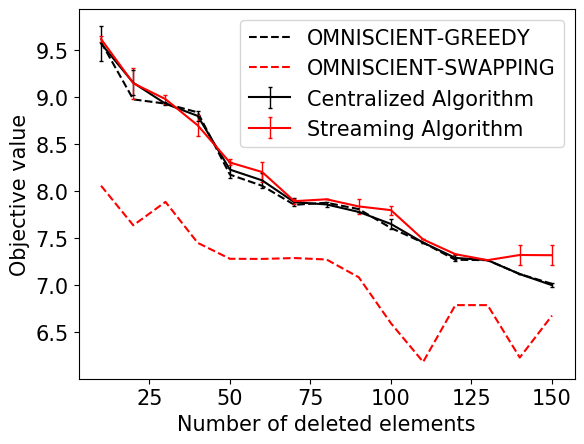}}
		\caption{\footnotesize Results for $\eps = 0.3$}
	\end{subfigure}\hspace{0.1pt}%
	\begin{subfigure}{.33\textwidth}
		\centering
		\scalebox{0.33}{\includegraphics{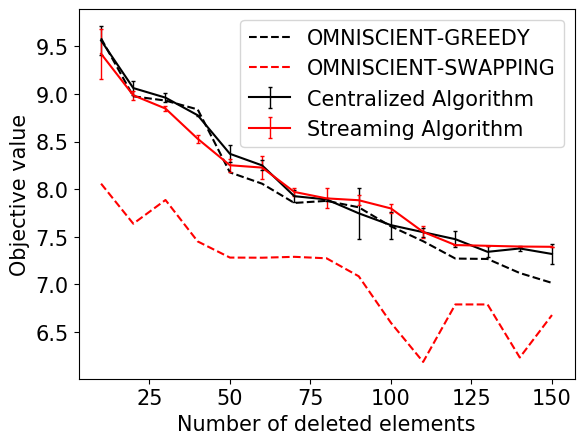}}
		\caption{\footnotesize Results for $\eps = 0.5$}
	\end{subfigure}\hspace{0.1pt}%
	\begin{subfigure}{.33\textwidth}
		\centering 		\scalebox{0.33}{\includegraphics{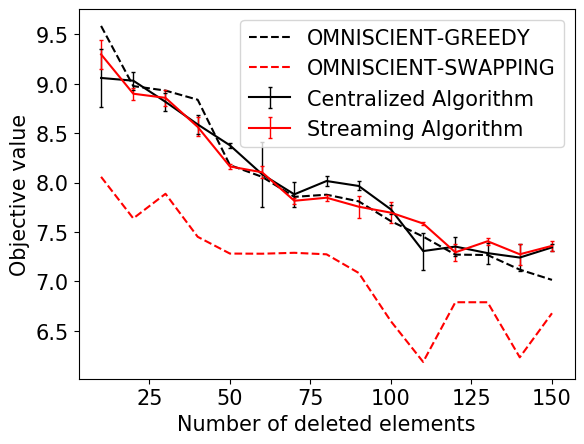}}
		\caption{\footnotesize Results for $\eps = 0.7$}
	\end{subfigure}
	\vspace{5pt}\\
	\begin{subfigure}{.33\textwidth}
		\centering
 		\scalebox{0.33}{\includegraphics{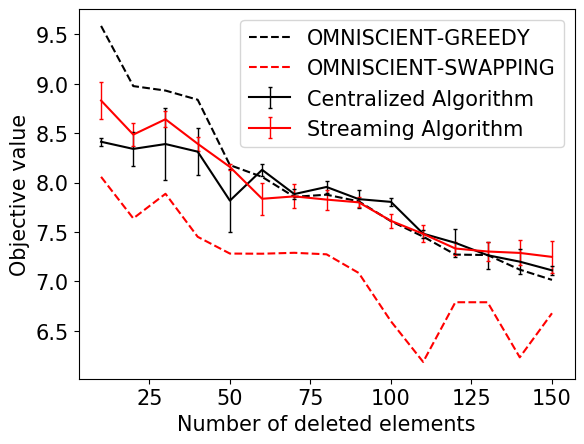}} 
        \caption{\footnotesize Results for $\eps = 0.9$}
	\end{subfigure}\hspace{0.1pt}%
	\begin{subfigure}{.33\textwidth}
		\centering
		\scalebox{0.33}{\includegraphics{uber logdet/uber_logdet_objective_0.99.png}} 
		\caption{\footnotesize Results for $\eps = 0.99$}
	\end{subfigure}
	\caption{\small Results of the kernel logdet experiment on the Uber dataset for different values of $\eps$. Note that there is never a remarkable difference between our algorithms and the strongest benchmark \omngreedy, but as $\eps$ decreases also the variability in the output of our algorithms decreases.} 
	\label{fig:uber_logdet}
\end{figure*}

\begin{figure*}[ht!]
	\captionsetup[subfigure]{aboveskip=1pt}
	\centering
	\begin{subfigure}{.33\textwidth}
		\centering
 		\scalebox{0.33}{\includegraphics{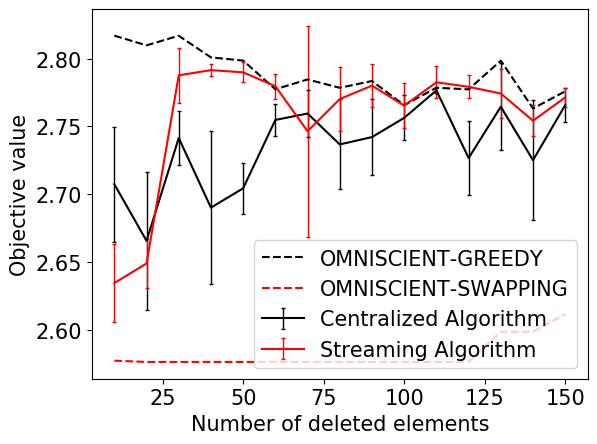}}
		\caption{\footnotesize Results for $\eps = 0.3$}
	\end{subfigure}\hspace{0.1pt}%
	\begin{subfigure}{.33\textwidth}
		\centering
		\scalebox{0.33}{\includegraphics{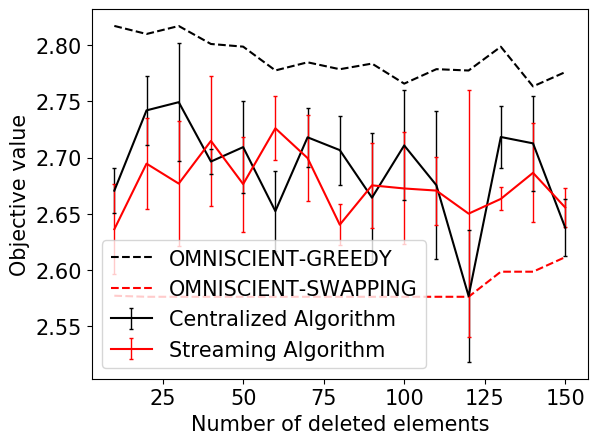}}
		\caption{\footnotesize Results for $\eps = 0.5$}
	\end{subfigure}\hspace{0.1pt}%
	\begin{subfigure}{.33\textwidth}
		\centering 		\scalebox{0.33}{\includegraphics{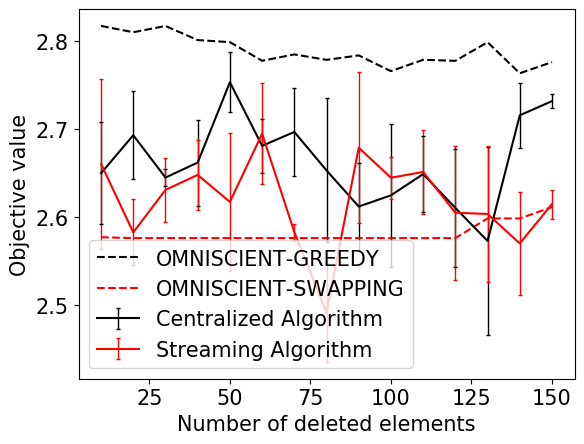}}
		\caption{\footnotesize Results for $\eps = 0.7$}
	\end{subfigure}
	\vspace{5pt}\\
	\begin{subfigure}{.33\textwidth}
		\centering
 		\scalebox{0.33}{\includegraphics{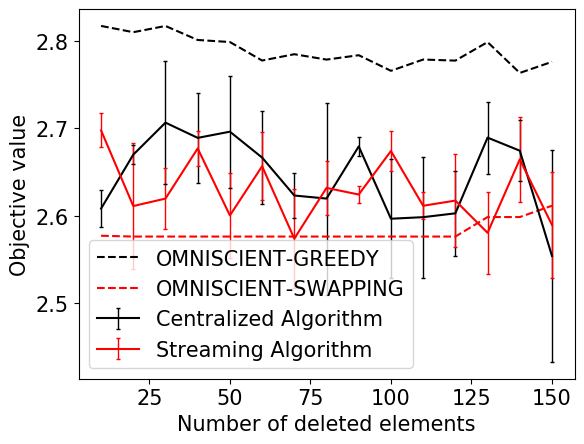}} 
        \caption{\footnotesize Results for $\eps = 0.9$}
	\end{subfigure}\hspace{0.1pt}%
	\begin{subfigure}{.33\textwidth}
		\centering
		\scalebox{0.33}{\includegraphics{uber kmedoid/uber_kmedoid_objective_0.99.png}} 
		\caption{\footnotesize Results for $\eps = 0.99$}
	\end{subfigure}
	\caption{\small Results of the $k$-medoid experiment on the Uber dataset for different values of $\eps$. Note how as $\eps$ decreases the performances of our algorithms improve, while being always comparable with the benchmarks.} 
	\label{fig:uber_kmedoid}
\end{figure*}

\end{document}